\newtheorem{myDef}{Definition}
\newtheorem{myTheo}{Theorem}
\def\BibTeX{{\rm B\kern-.05em{\sc i\kern-.025em b}\kern-.08em
    T\kern-.1667em\lower.7ex\hbox{E}\kern-.125emX}}
\begin{document}

\title{BAMG: A Block-Aware Monotonic Graph Index for Disk-Based Approximate Nearest Neighbor Search}


\author{\IEEEauthorblockN{Huiling Li, Xin Huang, Byron Choi, Jianliang Xu}
\IEEEauthorblockA{
\textit{Hong Kong Baptist University}\\
\{cshlli, xinhuang, bchoi, xujl\}@comp.hkbu.edu.hk}
}

\maketitle
\begin{abstract}
Approximate Nearest Neighbor Search (ANNS) over high-dimensional vectors is a foundational problem in databases, where disk I/O often emerges as the dominant performance bottleneck at scale. To accelerate search, graph-based indexes rely on proximity graph, where nodes represent vectors and edges guide the traversal toward the target. However, existing graph indexing solutions for disk-based ANNS typically either optimize the storage layout for a given graph or construct the graph independently of the storage layout, thus overlooking their interaction. In this paper, we bridge this gap by proposing the Block-aware Monotonic Relative Neighborhood Graph (BMRNG), theoretically guaranteeing the existence of I/O monotonic search paths. The core idea is to align the graph topology with the data placement by jointly considering both geometric distance and storage layout for edge selection. To address the scalability challenge of BMRNG construction, we further develop a practical and efficient variant, the Block-Aware Monotonic Graph (BAMG), which can be constructed in linear time from a monotonic graph considering the storage layout. BAMG integrates block-aware edge pruning with a decoupled storage design that separates raw vectors from the graph index, thereby maximizing block utilization and minimizing redundant disk reads. Additionally, we design a multi-layer navigation graph for adaptive and efficient query entry, along with a block-first search algorithm that prioritizes intra-block traversal to fully exploit each disk I/O operation. Extensive experiments on real-world datasets show that BAMG can outperform state-of-the-art methods in search performance. 

\end{abstract}
\begin{IEEEkeywords}
Disk-based ANNS, Proximity Graph, Vector Database, Indexing
\end{IEEEkeywords}

\section{Introduction}\label{introduction}
Nearest neighbor search (NNS) over high-dimensional vector data is a fundamental problem in databases~\cite{survey-vector}. Given a query vector and a similarity measure (e.g., Euclidean distance, cosine similarity, or inner product), the goal is to retrieve the most similar vectors in the dataset. However, exact nearest neighbor search becomes prohibitively expensive in high-dimensional spaces due to the ``curse of dimensionality''~\cite{cod}. Consequently, prior work has increasingly focused on approximate nearest neighbor search (ANNS), which offers a trade-off between search accuracy and efficiency. ANNS enables scalable retrieval over massive vector datasets by allowing approximate results and leveraging specialized indexing techniques, such as quantization~\cite{pqbf, gist_sift, survey_pq, DeltaPQ}, locality-sensitive hashing~\cite{icde_learned_functions, I-LSH, ei-lsh}, and proximity graphs (PGs)~\cite{kgraph, tau_mg, diskann, fanng, hm-ann, nsg, hnsw, nsw, ssg}. As exemplified by the toy graph in Fig.~\ref{fig: bmrng}, where each node represents a vector, instead of a global scan, the proximity graph enables efficient navigation from an entry node (Node 2) through edges to the query target (near Node 7).

ANNS has attracted substantial attention due to its applications in retrieval-augmented generation, natural language processing, and recommender systems~\cite{chen2018learning, chen2022approximate, kNN-TL}. As vector datasets continue to grow in size and dimensionality, the computational and storage costs have increased substantially. At a large scale, it is often impractical to keep the entire index and vector dataset in main memory. Consequently, both the index and the vectors reside on secondary storage, making I/O costs a critical bottleneck~\cite{starling, diskann}. This motivates the development of index structures designed specifically for I/O-efficient ANNS, which are essential for minimizing latency and improving the performance of vector database systems.

Existing indexes for I/O-efficient ANNS can be categorized into three main types: hashing-based methods~\cite{icde_learned_functions, I-LSH, ei-lsh}, quantization-based methods~\cite{pqbf}, and proximity graph-based methods~\cite{spann, filtered-diskann, diskann, hm-ann, starling}. While hashing and quantization reduce the in-memory footprint, they often struggle to maintain high recall at scale. In contrast, proximity graphs typically provide the best trade-offs between efficiency and accuracy. However, searching a proximity graph requires accessing repeated neighbor lists and raw‑vector reads, which make I/Os a bottleneck. DiskANN~\cite{diskann} addresses this issue by placing compact Product Quantization (PQ) codes~\cite{gist_sift} in memory to estimate distances and rank candidates, thereby reducing expensive reads of full-precision vectors from disk. In Starling~\cite{starling}, the search performance is further improved by reordering the storage layout on disk to enhance block-level locality. These advances reduce disk I/O operations, narrowing the gap between in-memory and disk-based ANNS.

However, a fundamental limitation remains. As mentioned above, existing studies typically optimize either the graph index structure or the storage layout on disk independently, rather than jointly; they refine the layout for a given graph index~\cite{starling}, or construct the graph solely based on geometric properties without accounting for the impact of its storage layout on the graph structure~\cite{diskann, xn-graph}. This separation overlooks the intrinsic relationship between graph structure and its storage layout, consequently restricting further reductions in disk I/Os and improvements in search efficiency. A key insight is that nodes in a graph index are typically stored in blocks on disk. The edges can be within the same block (intra-block) or across different blocks (inter-block). If an intra-block edge provides the same structural guarantee as a inter-block edge, it is preferable to choose the intra-block edge, as it avoids disk accesses. This insight motivates the joint consideration of the storage layout during graph index construction.

To address this issue, we introduce the Block-aware Monotonic Relative Neighborhood Graph (BMRNG), a novel graph structure for disk-based ANNS that jointly considers geometric distance and storage layout when defining proximity. As exemplified in Fig.~\ref{fig: rng_mrng_bmrng}, this integrated approach promotes I/O-efficient connectivity by (i) replacing the directed edge $(1, 5)$ in MRNG (Fig.~\ref{fig: rng_mrng_bmrng}(a)) with the directed edge $(1, 7)$ in BMRNG (Fig.~\ref{fig: rng_mrng_bmrng}(b)) and (ii) adding two intra-block edges, $(1, 4)$ and $(4, 1)$, in BMRNG. 
As a result, the average number of I/Os required to access other nodes from node $1$ is reduced from $1.42$ with MRNG to $0.86$ with BMRNG. In essence, 
by incorporating both geometric distance and storage layout into BMRNG's edge occlusion rule, we guarantee a new property --- I/O monotonicity: for any source-destination pair, there exists a path such that every disk access advances the search closer to the destination node. We further analyze the expected length of this monotonic I/O path, which is $O((n - c)/(n - 1) \cdot (n^{1/d} \log n^{1/d})/\Delta r)$. 

While BMRNG offers theoretical guarantees, its construction on large-scale datasets is limited by scalability constraints. Therefore, we further propose a practical variant, the Block-Aware Monotonic Graph (BAMG), which retains the essential I/O monotonicity property without incurring quadratic construction time. It leverages candidate connections from existing proximity graphs, intelligently applies BMRNG’s edge occlusion rules, and introduces a decoupled storage layout, i.e., separating raw vectors from the graph index to improve block utilization and avoid unnecessary data reads. Furthermore, we design a flexible multi-layer navigation graph for efficient entry node selection and propose a block-first search algorithm that prioritizes intra-block exploration, maximally utilizing every I/O access.

The main contributions of this paper are as follows:
\begin{itemize}
    \item We propose BMRNG, a novel proximity graph for disk-based ANNS that incorporates both geometric distance and storage layout in edge selection. To the best of our knowledge, BMRNG is the first graph structure for ANNS that explicitly considers the storage layout when determining proximity relationships.
    
    \item We develop BAMG, a practical and efficient variant that approximates BMRNG without incurring quadratic construction time. We design a storage layout that stores raw vectors separately from the graph index, along with a flexible navigation graph for entry node selection and a block-first search algorithm for efficient searching.

    \item We conduct extensive experiments on real-world datasets to compare BAMG with state-of-the-art methods. The experimental results show that BAMG significantly improves the efficiency of ANNS for disk-based systems.
\end{itemize}

The remainder of this paper is organized as follows. Section~\ref{preliminaroies} introduces the preliminaries. Section~\ref{BMRNG} presents the proposed BMRNG. Section~\ref{solution} details the BAMG. The experimental results are reported in Section~\ref{experiments}. We discuss related work in Section~\ref{related work} and draw the conclusion in Section~\ref{conclusion}.  

\section{Preliminaries}\label{preliminaroies}
In this section, we introduce the formal definition of ANNS, review fundamental concepts of graph-based indexing and storage layout, and explain the motivation of our work.

\subsection{Problem Statement}

\begin{myDef}[Approximate Nearest Neighbor Search]
    Let $V = \{v_1, v_2, \ldots, v_n\} \subset \mathbb{R}^d$ be a finite set of vectors in $d$-dimensional Euclidean space, equipped with a distance metric $\delta(\cdot, \cdot)$. Given a query vector $q \in \mathbb{R}^d$, the goal of ANNS is to find a vector $v^* \in V$ such that $v^*$ is "close" to $q$ in terms of the distance metric, that is,
    \[
    \delta(q, v^*) \leq \epsilon \cdot \min_{v \in V} \delta(q, v).
    \]
\end{myDef}

In other words, an approximate nearest neighbor is a vector whose distance to the query is within a factor $\epsilon \geq 1$ of the true minimum distance. In practice, this is often extended to the top-$k$ setting ($k$-ANN), which returns $k$ approximate nearest neighbors. By default, ANNS in this paper refers to the  $k$-ANN setting.

\begin{figure}[t]
    \centering
    \vspace{-0.1cm}
    \subfigure[MRNG]{
        \includegraphics[width=0.462\linewidth]{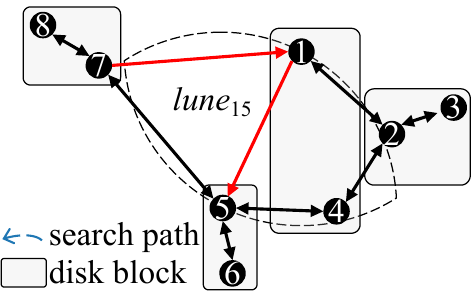}
        \label{fig: mrng}
    }
    \vspace{-0.1cm}
    \hspace{-0.07\linewidth} 
    \subfigure[BMRNG]{
        \includegraphics[width=0.462\linewidth]{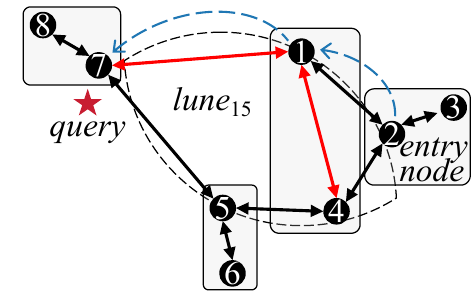}
        \label{fig: bmrng}
    }
    \caption{
    An example illustrating the block-aware graph.
    }
    \vspace{-0.2cm}
    \label{fig: rng_mrng_bmrng}
\end{figure}

\subsection{Graph Index and Monotonic Graph}
A graph-based index, known as a proximity graph, is constructed by mapping each vector to a node in the graph. In this structure, edges represent proximity relationships between vectors, enabling efficient similarity searches through graph traversal. Recent studies~\cite{survey-vector, survey_compar, pg_survey} have demonstrated that graph-based indexes provide superior performance for ANNS. A key factor in the effectiveness of PGs is the edge selection strategy, as it defines the proximity relationships between vectors and consequently has a substantial impact on both search efficiency and accuracy. Notably, the Relative Neighborhood Graph (RNG~\cite{RNG}) and its variants, the Monotonic RNG~\cite{nsg}, enable efficient searches by pruning edges while preserving only the necessary proximity and connectivity, thus reducing search time without sacrificing accuracy.

RNG~\cite{RNG} is an undirected graph that connects edges based on a geometric neighborhood criterion. The core principle of RNG lies in its edge occlusion rule: two nodes $u$ and $q$ are connected by an undirected edge if and only if there is no third node $v$ that is closer to both $u$ and $q$ than they are to each other, i.e., $\forall v \neq u, q: \max\left( \delta(u, v), \delta(q, v) \right) \geq \delta(u, q)$. It prunes edges that violate this condition, ensuring that RNG retains only edges where the lune‐shaped region contains no other nodes. The resulting graph is a subgraph of the Delaunay triangulation, offering sparsity while preserving connectivity. However, RNG is not monotonic; that is, it does not guarantee a sequence of edges in which the distance to the query decreases at every step. This lack of monotonicity can cause greedy searches to get stuck in local minima, negatively impacting both search efficiency and accuracy~\cite{msn}. As such, the Monotonic Relative Neighborhood Graph (MRNG)~\cite{nsg}, a variant of RNG, improves upon it by enforcing monotonic paths. Specifically, as shown in Fig.~\ref{fig: mrng}, MRNG additionally retains the direct edge $(1, 5)$ as there is no direct edge between node $1$ and node $4 \in lune_{15}$, in order to ensure monotonicity.

\subsection{Disk-Based ANNS}
As data volumes grow exponentially, it becomes increasingly challenging to fit entire datasets and indexes into memory. Consequently, both the graph index and raw vectors are often stored on disk. In this disk-resident setting, the ANN search incurs frequent disk I/Os, which significantly degrades query efficiency. A representative solution to address this issue is DiskANN~\cite{diskann}, which leverages a memory–disk hierarchy to balance search efficiency and accuracy. DiskANN keeps Product Quantization (PQ)~\cite{gist_sift} codes in memory for fast distance estimation while high-precision raw vectors are stored on disk for result refinement. In this case,  disk I/O is performed only when the currently retrieved nearest neighbor is updated, while the evaluation of neighboring nodes is conducted based on the distance estimations by PQ codes.

However, even with this framework, disk I/Os remain a major bottleneck. In general, disk-resident graph indexes are accessed in fixed-size blocks (e.g., 4 KB), so block-level locality among neighboring nodes is crucial for search performance. To exploit this locality, Starling~\cite{starling} further improves disk-based ANNS by optimizing the on-disk storage layout via block shuffling: it reorders the graph to place each node with as many of its neighbors as possible within the same block, aligning physical placement with graph topology. Meanwhile, searches are conducted at the block level. This optimization to the storage layout increases neighbor overlap and node utilization per disk I/O, reducing random disk I/Os by ensuring that each read yields multiple relevant nodes.

\vspace{-0.3em}
\subsection{Motivation}
For in-memory graph indexes, search cost is dominated by distance computations, approximated as the product of path length ($NH$) and node degree ($AD$). Under this premise, edge pruning rules are designed to balance navigability and sparsity to reduce total distance evaluations. However, in disk-based ANNS, disk I/O emerges as the primary bottleneck. To quantify this, we conducted a time breakdown analysis on the Deep dataset (details in Section~\ref{experiments}). As shown in Fig.~\ref{fig:breakdown_lat}, in DiskANN, over $93\%$ of search time is spent on disk I/Os. Even with Starling's \cite{starling} block reordering enabled, I/O latency still accounts for more than $58\%$ of the total search latency. 

The root cause lies in the static nature of the underlying graph topology. Although Starling attempts to place a node and its neighbors within the same block, our analysis (Fig.~\ref{fig:starling_degree}) reveals that over 90\% of graph edges remain inter-block edges (edges connecting nodes in different blocks). Because the topology is fixed during layout optimization, searches are still forced to traverse these inter-block edges frequently, undermining the benefits of data reordering.

This limitation highlights a critical oversight in current designs: the failure to account for the asymmetric cost of traversing intra-block versus inter-block edges. In disk-resident settings, traversing an intra-block edge incurs negligible I/O cost once the block is loaded into memory. In contrast, following an inter-block edge often triggers a high-latency disk access. Existing proximity graphs~\cite{nsg, diskann, hnsw, tau_mg} are agnostic to this physical layout, selecting edges based solely on topological criteria (e.g., monotonicity or connectivity). While effective for accuracy, these designs may not be friendly to I/O efficiency.

These limitations motivate us to adopt different edge occlusion rules for intra-block and inter-block edges, explicitly considering the storage layout during index construction.

\begin{figure}[htbp]
    \centering
    \begin{minipage}[t]{0.24\textwidth}
        \centering
        \includegraphics[width=\linewidth]{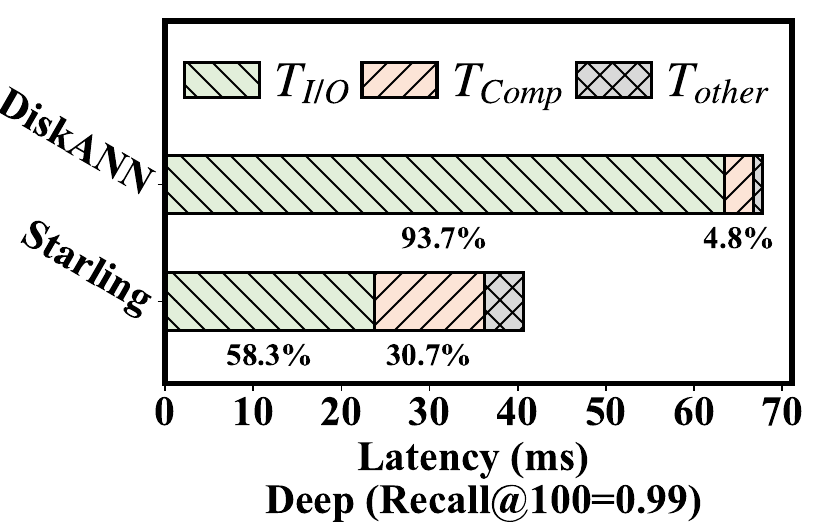}
        \vspace{-1.6em}
        \caption{Breakdown of search latency.}
        \label{fig:breakdown_lat}
    \end{minipage}
    \hfill
    \begin{minipage}[t]{0.24\textwidth}
        \centering
        \includegraphics[width=\linewidth]{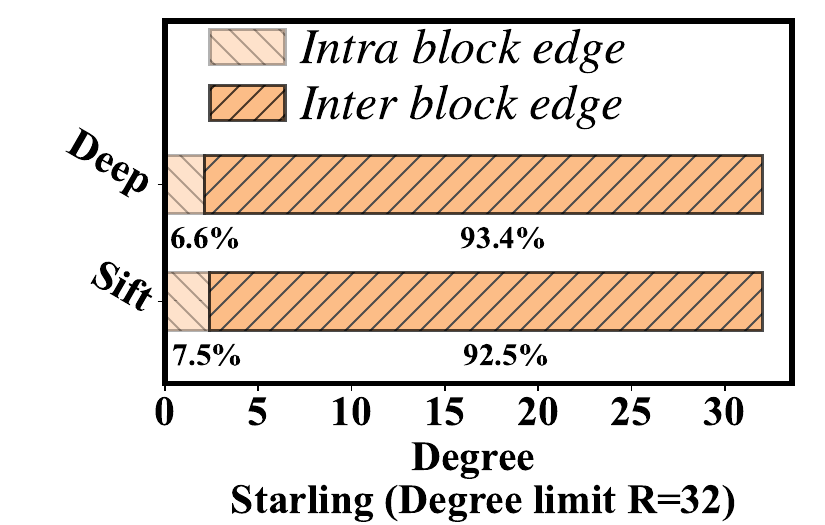}
        \vspace{-1.6em}
        \caption{Proportion of Intra- and Inter-Block Edges in Starling.}
        \label{fig:starling_degree}
    \end{minipage}
\end{figure}

\section{Block-aware Monotonic Relative Neighborhood Graph}\label{BMRNG}
In this section, we propose a new graph structure specifically designed for disk-based ANNS, which we refer to as the Block-Aware Monotonic Relative Neighborhood Graph (BMRNG). The key idea is to co-design the graph topology with its storage layout on disk, with the expectation that each I/O operation monotonically approaches the target. First, we formalize the block assignment, monotonic I/O path, and the BMRNG. We then present its edge occlusion rules and analyze the theoretical properties of BMRNG.

\begin{myDef}[Block Assignment]
    Given a proximity graph $G=(V, E)$, a \emph{block assignment} is a tuple $\mathcal{B}=(V, \mathcal{L})$, where $\mathcal{L}: V \rightarrow \{1, 2, \ldots, m\}$ is the assignment function that assigns each node to one of $m$ blocks. Specifically, for each $i \in 1, 2, \ldots, m$, the set $B_i = \{ v \in V: \mathcal{L}(v) = i \}$ forms the $i$-th block, and $B_1, \ldots, B_m$ constitute a partition of $V$. 
\end{myDef}

In practice, each block is stored contiguously on disk, with a fixed size matching the operating system’s page size.

\begin{myDef}[Monotonic I/O Path]
    A monotonic I/O path from node $u$ to node $q$ is a sequence of blocks $P = \left [ B_0, B_1, \ldots, B_{L} \right ]$ satisfying the following conditions:
    \begin{enumerate}
        \item \textbf{Completeness: } For each $B_i \in P$, there exists a path of nodes within block $B_i$, i.e., a sequence $\left [ v_{i,1}, v_{i,2}, \ldots, v_{i,l_i} \right ]$ with $v_{i,j} \in B_i$ and $(v_{i,j}, v_{i,j+1}) \in E$ for $j=1,\ldots,l_i-1$. For adjacent blocks in path $P$, there exists $(v_{i, l_i}, v_{i+1, 1}) \in E$ with $v_{0,1}=u$ and $v_{L,l_L}=q$.
    \begin{sloppypar}
        \item \textbf{I/O Monotonicity:} For all $i = 0, 1, \ldots, L-1$, we have $\delta(v_{i, l_i}, q) > \delta(v_{i+1, l_{i+1}}, q)$; that is, the distance to $q$ strictly decreases with each block transition along the I/O path. Furthermore, for all $j = 1, 2, \ldots, l_i - 1$, $\delta(v_{i, j}, q) > \delta(v_{i, j+1}, q)$; that is, the distance to $q$ also strictly decreases with each step within the same block.
    \end{sloppypar}
    \end{enumerate}
\end{myDef}
Completeness requires that the blocks along the path are connected sequentially through the edges in $G$, without interruptions. That is, both intra-block and inter-block traversals along $P$ follow the graph edges, thus avoiding access to unnecessary nodes. I/O monotonicity ensures that the distance to the target strictly decreases with each I/O step, which guarantees steady progress without oscillation.

\begin{myDef}[Block-aware Monotonic Relative Neighborhood Graph]
     Given the proximity graph $G=(V, E)$ together with its block assignment $\mathcal{B}=(V, \mathcal{L})$ on disk, we call it BMRNG if, for any pair of nodes $u, q \in V$, there exists a monotonic I/O path between $u$ and $q$.
\end{myDef}

We have conceptually defined BMRNG. Next, we introduce the edge occlusion rules for constructing a BMRNG. 

\subsection{Edge Occlusion Rules of BMRNG}
To construct a BMRNG with a reduced number of connecting edges, we adopt distinct edge-pruning strategies for intra-block and inter-block edges. The rationale is as follows: during a search, traversing intra-block edges avoids additional I/Os when the current block is already in memory, whereas traversing inter-block edges typically incurs a block fetch and therefore an extra disk I/O. As such, we define which edges to include based on both their geometric properties and the block assignment. Specifically, the occlusion of edges in the BMRNG is determined according to the following rules:

\noindent\textbf{Rule 1:} Given any two nodes $u$ and $q$ in $G$, if $\mathcal{L}(u) = \mathcal{L}(q)$ and edge $(u, q)$ is an edge of the MRNG induced by the nodes in block $B_{\mathcal{L}(u)}$, then $(u, q) \in BMRNG$.

\noindent\textbf{Rule 2:} For any edge $(u, q)$ with $\mathcal{L}(u) \neq \mathcal{L}(q)$, $(u, q) \notin BMRNG$ if and only if there exists a node $v$ such that $(u, v) \in BMRNG$ and Case 1 or Case 2 holds: 
\begin{itemize}
    \item {Case 1 ($\mathcal{L}(u) = \mathcal{L}(v)$):} $v \in lune_{u, q}$;
    \begin{sloppypar}
    \item {Case 2 ($\mathcal{L}(u) \neq \mathcal{L}(v)$)}: there is a monotonic path $\left[v_1 = v, v_2, \cdots, v_l \right]$ inside block $B_{\mathcal{L}(v)}$ leading to $q$ with $v_l \in lune_{u, q}$ and $l \geq 1$.
    \end{sloppypar}
\end{itemize}

Here, MRNG~\cite{nsg} is a monotonic graph, guaranteeing that there exists a monotonic path between any two nodes. Consequently, Rule~1 ensures the presence of monotonic paths between nodes within the same block. The $lune_{uq}$ is the intersection of two balls centered at $u$ and $q$ with radius $\delta(u, q)$; $v \in lune_{pq}$ iff $\delta(u,v) < \delta(u,q)$ and $\delta(q,v) < \delta(u,q)$. 

The key distinction between our BMRNG and MRNG lies primarily in Rule~2, Case~2. In MRNG, only $u$'s immediate neighbors can occlude the edge $(u, q)$. While, in BMRNG, the set of nodes that can occlude edge $(u, q)$ is expanded to include any node reachable with a single disk I/O followed by an intra-block monotonic path.

\begin{myTheo} \label{theo 1}
    Given a set $V$ of $n$ nodes and the block assignment $\mathcal{B}=(V, \mathcal{L})$, let $G$ be the proximity graph satisfying Rule~1 and Rule~2; then $G$ is a BMRNG.
\end{myTheo}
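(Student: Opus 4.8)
The plan is to prove the statement by strong induction on the ``target rank'' of the source node: fix the target $q$ and induct on $\delta(u,q)$, which is well-founded because $V$ is finite. The goal at each step is to exhibit a concrete monotonic I/O path from $u$ to $q$, built either directly or by splicing a short ``bridge'' onto a path obtained from the induction hypothesis.

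First I would dispose of the situation $\mathcal{L}(u) = \mathcal{L}(q)$ (which in particular covers $u = q$). Here Rule~1 guarantees that every edge of the MRNG induced by the nodes of the single block $B_{\mathcal{L}(u)}$ is a BMRNG edge; since MRNG is a monotonic graph, there is a path $[u = v_1,\dots,v_l = q]$ inside that block along which the distance to $q$ strictly decreases, and all of its edges lie in BMRNG. Hence the one-block sequence $P = [\,B_{\mathcal{L}(u)}\,]$ equipped with this internal path satisfies both Completeness and I/O Monotonicity, and no recursion is needed.

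Next comes the inductive step with $\mathcal{L}(u)\neq\mathcal{L}(q)$ (so $u\neq q$ and $\delta(u,q)>0$). I would split on whether $(u,q)\in BMRNG$. If it is, the two-block sequence $[\,B_{\mathcal{L}(u)},B_{\mathcal{L}(q)}\,]$ with internal paths $[u]$ and $[q]$ and cross-block edge $(u,q)$ is already a monotonic I/O path, since $\delta(u,q)>0=\delta(q,q)$. If $(u,q)\notin BMRNG$, the forward direction of Rule~2 yields a node $v$ with $(u,v)\in BMRNG$ realizing Case~1 or Case~2. In Case~1, $v\in B_{\mathcal{L}(u)}$ and $v\in lune_{u,q}$, so $\delta(v,q)<\delta(u,q)$; set $w=v$ and take the bridge to be the single-block fragment $[u,v]$ in $B_{\mathcal{L}(u)}$. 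In Case~2, there is a BMRNG monotonic fragment $[v_1=v,\dots,v_l]$ inside $B_{\mathcal{L}(v)}$ with $v_l\in lune_{u,q}$; set $w=v_l$ and take the bridge to be $[u]$ in $B_{\mathcal{L}(u)}$, then the cross edge $(u,v_1)$, then $[v_1,\dots,v_l]$ in $B_{\mathcal{L}(v)}$. In both cases $\delta(w,q)<\delta(u,q)$, the bridge ends at $w$, uses only BMRNG edges, and is strictly decreasing toward $q$; its block transition into $B_{\mathcal{L}(w)}$ (present only in Case~2) is monotone because $\delta(u,q)>\delta(w,q)$. I would then invoke the induction hypothesis on $w$ to get a monotonic I/O path $P_w$ from $w$ to $q$ and glue it to the bridge: since $P_w$'s first block is $B_{\mathcal{L}(w)}$ and its internal sequence starts at $w$, I merge that first block with the bridge's last block (which also ends at $w$) by concatenating the two internal sequences at $w$. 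The concatenation stays strictly decreasing toward $q$; the transition entering the merged block remains monotone (the predecessor's last node, namely $u$, is strictly farther from $q$ than $w$ and hence than every later node of the merged block); and all remaining blocks and transitions are inherited unchanged from the bridge and from $P_w$. The spliced path therefore satisfies Completeness and I/O Monotonicity, starts at $u$, and ends at $q$.

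I expect the main obstacle to be exactly this gluing argument: one must be careful not to leave two consecutive entries of the path that are the same physical block, and must re-verify strict monotonicity at the junction, which is stated in terms of the \emph{last} node of each block, so the merge has to be checked both for the transition entering the merged block and for the transition leaving it. A secondary point worth making explicit in the write-up is that Rule~2 is a recursive (fixed-point) specification of membership in BMRNG; accordingly the proof should treat ``$G$ satisfies Rule~1 and Rule~2'' as a hypothesis and use only the existence of the occluding witness $v$ together with the fact that the Case~2 fragment consists of BMRNG edges inside one block, rather than attempting to reconstruct $BMRNG$ from scratch.
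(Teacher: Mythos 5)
Your proposal is correct and follows essentially the same route as the paper's proof: Rule~1 supplies the intra-block monotonic path when $\mathcal{L}(u)=\mathcal{L}(q)$, and otherwise the occluding witness from Rule~2 (Case~1 or Case~2) yields a node strictly closer to $q$, with finiteness of $V$ guaranteeing termination. The only difference is presentational -- you organize the argument as strong induction on $\delta(u,q)$ with a splice at the junction block, whereas the paper builds the path by forward iteration -- and your explicit care about merging the bridge's last block with the recursive path's first block is a detail the paper's incremental construction sidesteps rather than a new idea.
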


\begin{proof}
    Let $u, q \in V$ be any two nodes in $G$.
    
    (1) $\mathcal{L}(u) = \mathcal{L}(q)$.
    
    By \textbf{Rule~1}, the subgraph induced by the nodes inside block $B_{\mathcal{L}(u)}$ is an MRNG. By Theorem 3 in~\cite{nsg}, MRNGs admit a strictly monotonic path to the target. Hence, there exists a monotonic path from $u$ to $q$ within $B_{\mathcal{L}(u)}$. The corresponding monotonic I/O path is $\{B_{\mathcal{L}(u)}\}$ of length $1$.

    (2) $\mathcal{L}(u) \neq \mathcal{L}(q)$.
    
    We try to construct an I/O path $P = \{B_1, B_2, \cdots, B_k\}$ from $u$ to $q$ iteratively. Let $p = \left[ v_{1,1}, \ldots, v_{1,l_1}, v_{2,1}, \ldots, v_{2,l_2}, \ldots, v_{k,1}, \ldots, v_{k,l_k} \right]$ be the sequence of nodes visited along the I/O path $P$. Let $v$ represent any node in $p$. We initially set $v:=u$, then iteratively build $P$ and $p$. 
    
    \textbf{Iteration termination.} 
    If edge $(v, q) \in E$, append $q$ to path $p$, and then:
    
    If $\mathcal{L}(v) = \mathcal{L}(q)$, the I/O path terminates. 
    
    If $\mathcal{L}(v) \neq \mathcal{L}(q)$, append $B_{\mathcal{L}(v)}$ to path $P$ and then terminate the I/O path. 
    
    Since $v \neq q$, we have $\delta(v, q) > \delta(q, q) =0$. In either case, the final move is strictly decreasing with respect to the distance to $q$.
    
    \textbf{Inductive steps.}
    If edge $(v, q) \notin E$, according to \textbf{Rule~2}, there exists an edge $(v, x) \in G$ that satisfies one of the following two conditions:
    
    (i) $\mathcal{L}(v) = \mathcal{L}(x)$ and $x \in lune_{v, q}$.
    
    (ii) $\mathcal{L}(v) \neq \mathcal{L}(x)$ and there is a monotonic path inside $\mathcal{L}(x)$ toward $q$ whose last node $y \in lune_{v, q}$. 
    
    In case (i), we append $x$ to $p$. Since $x \in lune_{v, q}$, we have $\delta(v,q) > \delta(x, q)$. Thus, this intra-block step is monotonic. Then, we set $v:=x$. 
\begin{sloppypar}
    In case (ii), we append the monotonic intra-block segment $\left[y, \cdots, z\right]$ to $p$ and $B_{\mathcal{L}_y}$ to $P$. Since $y \in lune_{v, q}$, we have $\delta(v,q) > \delta(y, q)$. Thus, this inter-block step is I/O monotonic. Then, we set $v:=y$. 
\end{sloppypar}
In the above iterative process, because the distance to $q$ decreases strictly at each iteration and there are finite nodes, the process terminates in finitely steps. The only node with no strictly closer successor is $q$; hence the final node is $q$. By construction, each intra-block segment is monotonic, and every inter-block transition occurs at a node where the distance to $q$ strictly decreases. Therefore, the sequence of visited blocks forms a monotonic I/O path from $u$ to $q$. 

Since $u$ and $q$ are any two nodes in $G$, we can conclude that there is a monotonic I/O path between any two nodes in the proximity graph determined by \textbf{Rule~1} and \textbf{Rule~2}. i.e., $G$ is an BMRNG.

This completes the proof.
\end{proof}

Theorem~\ref{theo 1} shows that there is a monotonic I/O path $P$ between any two nodes in a BMRNG. Next, we analyze the length of $P$.

\begin{myTheo}\label{theo 2}
    We adopt the assumptions of Theorem~2 in~\cite{nsg} without restating them. Additionally, we assume that nodes are partitioned uniformly at random into $m = \lceil n/c \rceil$ blocks, each containing exactly $c$ nodes ($1 \leq c \leq n$).
    Let $u$ and $q$ be any two nodes in the $G=(V, E)$ and $P$ be a monotonic I/O path from $u$ to $q$. The expected length of $P$ is  $O(\frac{n - c}{n - 1} \cdot \frac{n^{1/d} \log n^{1/d}}{\Delta r})$.
\end{myTheo}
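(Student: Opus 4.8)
The plan is to split the bound into two independent ingredients: an NSG-style bound on the number of \emph{nodes} visited along a monotonic path, and the expected fraction of the node-to-node steps of that path that actually cross a block boundary. Concretely, by Theorem~\ref{theo 1} there is a monotonic I/O path $P=[B_0,\dots,B_L]$ from $u$ to $q$; let $p=[v_1,v_2,\dots]$ be the concatenation of its within-block node segments. By the I/O-monotonicity condition, $\delta(\cdot,q)$ strictly decreases along $p$ both within each block and at every block transition, so $p$ is a monotonic path in $G$, and the length $L$ of $P$ equals the number of indices $j$ with $\mathcal{L}(v_j)\neq\mathcal{L}(v_{j+1})$ (endpoints and any revisited block change this by only a lower-order additive term, which is harmless for an upper bound).

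First I would bound $\mathbb{E}[|p|]$. Under the assumptions inherited from Theorem~2 of~\cite{nsg} (points drawn uniformly and densely in a ball whose radius scales as $\Theta(n^{1/d})$, with an expected per-step decrease $\Delta r$ in the distance to $q$), I would argue the analysis used there for MRNG monotonic paths carries over to $p$: each node-step of $p$ is monotone; by Rule~1 the within-block subgraphs are MRNGs; and every cross-block hop produced by the construction in the proof of Theorem~\ref{theo 1} lands strictly inside $lune_{v,q}$, mirroring a single monotone MRNG step. Hence $p$ is, up to constants and in expectation, no longer than an MRNG monotonic path, giving $\mathbb{E}[|p|]=O\!\big(n^{1/d}\log n^{1/d}/\Delta r\big)$.

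Next I would compute the expected number of block-crossing steps. Since the $n$ nodes are partitioned uniformly at random into blocks of size $c$, any two distinct nodes share a block with probability $(c-1)/(n-1)$, so a fixed consecutive pair $(v_j,v_{j+1})$ of $p$ is a cross-block pair with probability $(n-c)/(n-1)$. Summing over the $|p|-1$ steps and using linearity of expectation gives $\mathbb{E}[L]\le \frac{n-c}{n-1}\,\mathbb{E}[|p|-1]$, and combining with the previous paragraph yields the claimed $O\!\big(\frac{n-c}{n-1}\cdot\frac{n^{1/d}\log n^{1/d}}{\Delta r}\big)$.

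The main obstacle is the probabilistic independence in the last step: the path $p$ produced by the construction of Theorem~\ref{theo 1} itself inspects the labeling $\mathcal{L}$ (Rule~2 branches on whether $\mathcal{L}(v)=\mathcal{L}(x)$), so the event $\{\mathcal{L}(v_j)\neq\mathcal{L}(v_{j+1})\}$ is not literally independent of the identities of $v_j,v_{j+1}$. I would close this with a principle-of-deferred-decisions argument (reveal a node's block label only the first time the construction reaches that node, and use exchangeability of the uniform random partition over the unrevealed labels), or, failing a fully rigorous version, explicitly adopt the common modeling assumption that the visited node sequence is independent of the random partition. A secondary subtlety worth checking is that the within-block monotone detours invoked by Rule~2, Case~2 do not inflate $|p|$ beyond the NSG order; this holds because each detour stays inside a single $c$-node block and contributes only intra-block (I/O-free) steps, whose count is itself controlled by the cross-block step count analyzed above.
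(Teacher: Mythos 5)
Your proposal follows essentially the same route as the paper: extract the monotone node sequence $p$ from the I/O path, invoke the NSG bound $O(n^{1/d}\log n^{1/d}/\Delta r)$ on $\mathbb{E}[|p|]$, and multiply by the per-step cross-block probability $\frac{n-c}{n-1}$ via linearity of expectation, exactly as in the paper's identity $\mathbb{E}[k-1]=\frac{n-c}{n-1}\mathbb{E}[k'-1]$. The dependence between the visited path and the random block labeling that you flag is a real subtlety, and the paper handles it no more rigorously than you do --- it simply asserts linearity of expectation ``even with correlated events'' --- so your explicit acknowledgment (deferred decisions or an independence modeling assumption) is, if anything, more honest than the original.
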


\begin{proof}
    Let $P = \{B_1, B_2, \dots, B_k\}$ be a monotonic I/O path from $u$ to $q$. We extract from $P$ a subsequence $p = [v_1, v_2, \dots, v_{k'}]$ of nodes with strictly decreasing distances to $q$, where $k' \geq k$. Specifically:
\begin{itemize}
    \item $p$ includes all nodes in block $B_1$ visited along $P$.
    \item For each $i \geq 2$, $p$ includes nodes from $B_i$ starting from the first node closer to $q$ than $v_{i-1,l_{i-1}}$.
\end{itemize}

    Assume that block assignment $\mathcal{B}$ distributes nodes uniformly at random, with each block containing exactly $c$ nodes ($m = \left \lceil n/c \right \rceil $). Let
    \[
    X = \sum_{i=1}^{k'-1} \mathbf{1}\{v_i \text{ and } v_{i+1} \text{ reside in the same block}\},
    \]
    where $\mathbf{1}\{\cdot\}$ denotes the indicator function.
    
    For any two distinct nodes $x$ and $y$, the probability that $y$ is assigned to the same block as $x$ is $(c-1) / (n-1)$.
    By linearity of expectation (\textit{even with correlated events}):
    \[
    \mathbb{E}[X \mid k'] = (k' - 1)\frac{c - 1}{n - 1}.
    \]
    Applying the Law of Total Expectation:
    \begin{align*}
    \mathbb{E}[X] &= \mathbb{E}[\mathbb{E}[X \mid k']] \\
    &= \frac{c - 1}{n - 1}\mathbb{E}[k' - 1].
    \end{align*}
    Since $\mathbb{E}[k - 1] = \mathbb{E}[k' - 1] - \mathbb{E}[X]$, we get:
    \[
    \mathbb{E}[k - 1] = \frac{n - c}{n - 1}\mathbb{E}[k' - 1].
    \]
    From~\cite{nsg}, $\mathbb{E}[k' - 1] = O(n^{1/d} \log n^{1/d}/\Delta r)$, yielding the result.
\end{proof}

\subsection{Analysis of Constructing BMRNG}
We can construct a BMRNG in the following three steps:

\textbf{Block Assignment.} All of $n$ nodes are partitioned into $m$ blocks to obtain a block assignment. This step can be formulated as a clustering problem, a balanced graph partitioning problem~\cite{balanced_gp}, or a block shuffling problem~\cite{starling}. Existing algorithms proposed for these problems, which operate in near-linear time~\cite{benlic2010effective, rahimian2013ja} or linearithmic time~\cite{starling, margo}, can be directly leveraged for block assignment. 

\textbf{Intra-block MRNG Construction.} Within each block, an MRNG is constructed. The time complexity of this step is $O(mc^2 \log_{}{c})$.

\textbf{Inter-block Edge Construction.} Similar to other proximity graph construction methods~\cite{nsg, diskann, tau_mg}, we can apply Rule~2 for each node to determine the inter-block edges. Specifically, for each node $u$, we compute the distances between $u$ and all nodes outside its own block $B_{\mathcal{L}(u)}$, and sort these candidate nodes in ascending order of distance to $u$. The time complexity of this part is $O(n^2\log_{}{n})$. We then traverse the sorted candidate nodes $q$ and, for each, check whether any existing neighbor $v$ of $u$ would exclude the edge $(u, q)$ according to Rule~2 (either case~1 or case~2). If no such $v$ exists, $q$ is added to the neighbor set of $u$. For case~1, it is sufficient to check whether there exists a $v \in \mathrm{lune}_{u, q}$. For case~2, we first perform a greedy search towards $q$ within block $B_{\mathcal{L}(v)}$, and then check whether the nearest node $v'$ returned by the greedy search satisfies $v' \in \mathrm{lune}_{u, q}$. The time complexity of edge exclusion is $O(n^2d\log_{}{c})$. Here $d$ can be regarded as the average out-degree of each node, $O(\log_{}{c})$ is the time complexity of the greedy search inside the block. Hence, the time complexity of this step is $O(n^2\log_{}{n} + n^2d\log_{}{c})$. Since $n \gg c$, the time complexity simplifies to $O(n^2\log_{}{n})$.

Although block assignment is an NP-hard problem, heuristic algorithms~\cite{benlic2010effective, rahimian2013ja, starling} can obtain suboptimal solutions in near-linear or $O(n\log_{}{n})$ time. Therefore, the main computational cost in building a BMRNG still comes from inter-block edge construction. Consequently, the overall time complexity for constructing a BMRNG is at least $O(n^2\log_{}{n})$.

\section{Block Aware Proximity Graph}\label{solution}
BMRNG provides a valuable property for disk-based ANNS by ensuring a monotonic I/O path between any two nodes. This enables each I/O step to progressively approach the target node during the search. However, directly constructing a BMRNG is impractical or unnecessary for two reasons. First, constructing an exact BMRNG requires at least $O(n^2 \log n)$ time, which is prohibitive for large datasets. Second, the block assignment problem is NP-hard~\cite{starling} and, unless P=NP, admits no polynomial-time approximation algorithm with a bounded approximation ratio. As a result, practical block assignments are often suboptimal and may even be poor. For example, nodes that are far apart in the data space may end up being placed in the same block. As a result, maintaining the MRNG property within each block could require adding unnecessary edges between nodes that are not true neighbors.

In addition, even if an exact BMRNG is available, search efficiency also depends on the block assignment. Theorem~\ref{theo 2} shows that increasing the number of nodes per block shortens the expected I/O path. However, the block size is limited by the fixed size of OS disk pages (typically 4KB). We observe that the raw vectors occupy more than $90\%$ of the space in the entire graph index. To fit more nodes in each block, we reduce the per-node storage by separating raw vectors from the graph index. Taken together, these constraints highlight the need for a more practical graph index. 

Based on the above considerations, we present the Block‑Aware Monotonic Graph, referred to as BAMG. BAMG approximates BMRNG’s monotonic I/O progress without incurring quadratic construction time. By adopting a storage layout that separates the raw vector data from the graph index, BAMG increases the number of nodes stored per block. Furthermore, we design a flexible multi-layer navigation graph and a block-first search algorithm to enhance search performance. The following subsections provide a detailed description of these key components.

\subsection{Construction of BAMG}
The edge-occlusion strategy in our BMRNG integrates geometric criteria with the on-disk layout of the index. Because block assignments are often suboptimal, this coupling may undermine desirable geometric properties. To mitigate this, we first reconstruct a proximity graph with strong geometry and then convert it into an approximate BMRNG. The resulting BAMG exploits the layout while preserving geometry as much as possible. This is justified because BMRNG strictly generalizes MRNG. Specifically, when searching along any monotonic path in an MRNG, the I/O path formed by the blocks traversed is guaranteed to be a strictly monotonic I/O path. Therefore, starting from an (approximate) MRNG and applying block-aware refinements yields a valid BMRNG.

Concretely, we obtain BAMG by reconstructing an approximate MRNG produced by the NSG algorithm. Here, NSG can be any other monotonic graphs. As shown in Fig.~\ref{fig: BAMG exp}, we first build an NSG (Fig.~\ref{exp:mrng}) and obtain its block assignment using BNF~\cite{starling} (Fig.~\ref{exp:mrng_layout}). We then retain all intra‑block edges, rather than constructing an MRNG within each block, in order to mitigate the negative impact of suboptimal block assignment on the graph's geometric properties. We treat inter-block edges as candidates for edges in BAMG and prune them using Rule 2 (Case 2). For example, in Fig.~\ref{exp:mrng_layout_pruning}, edge $(5, 2)$ is occluded because edge $(5, 4)$ is already in the graph, and within block $B_2$ containing $4$ there exists a monotonic path $\left[4, 3\right ]$ with $\delta(3, 2) < \delta(5, 2)$. Checking Rule 2 (Case 1) is unnecessary, as NSG’s edge selection already satisfies it. Additionally, when two neighbors of a node reside in the same block, we heuristically add two edges between these two neighbors to reduce potential duplicate block accesses during search. For example, in Fig.~\ref{exp:mrng_layout_pruning}, the edges $(8, 7)$ and $(7, 8)$ are added because both $7$ and $8$ are neighbors of the $9$ and lie in the same block $B_4$.

\begin{figure*}[t]
\centering
\hspace*{-0.25cm}
\subfigure[NSG]{
    \includegraphics[width=2.3in]{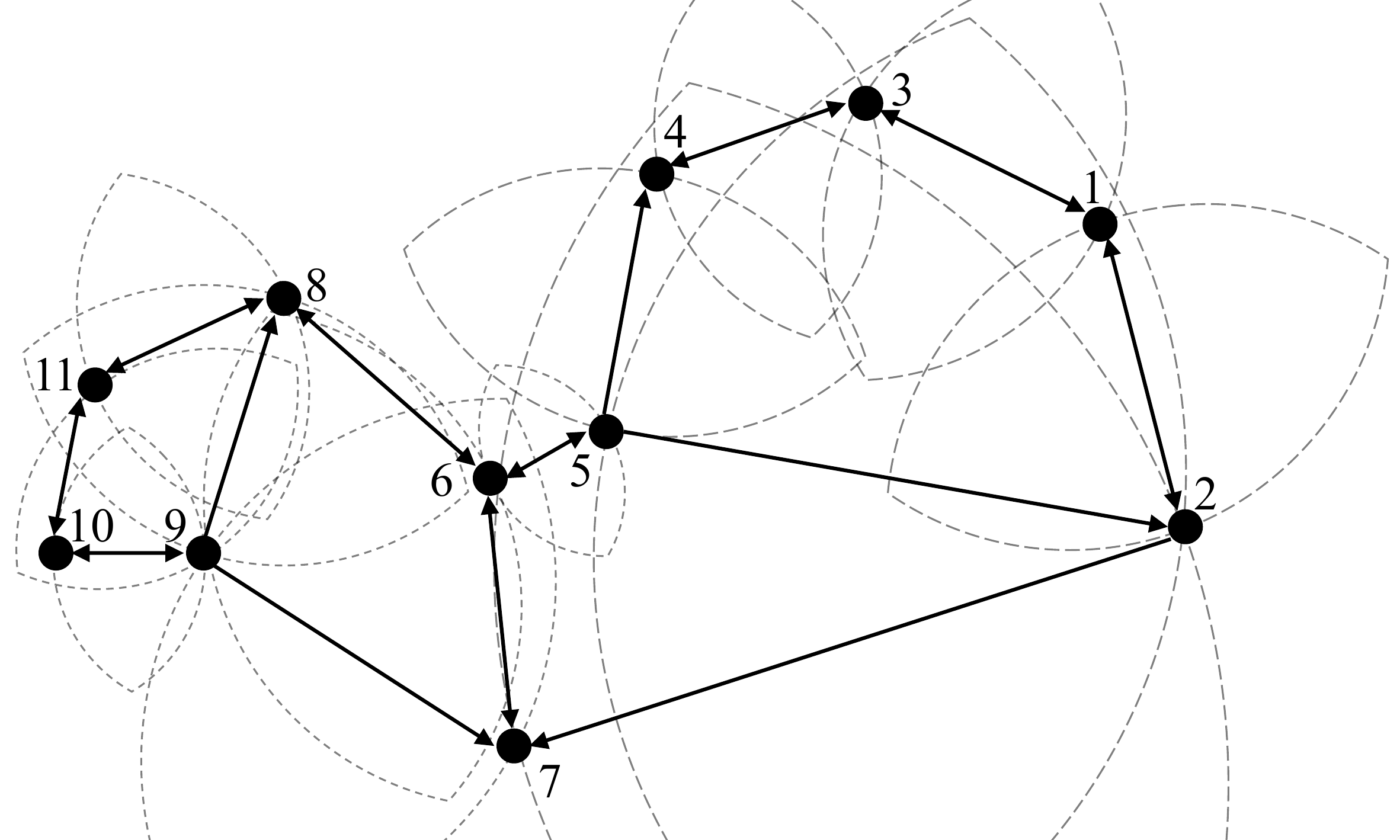}
    \label{exp:mrng}
}%
\subfigure[NSG-BNF]{
    \includegraphics[width=2.3in]{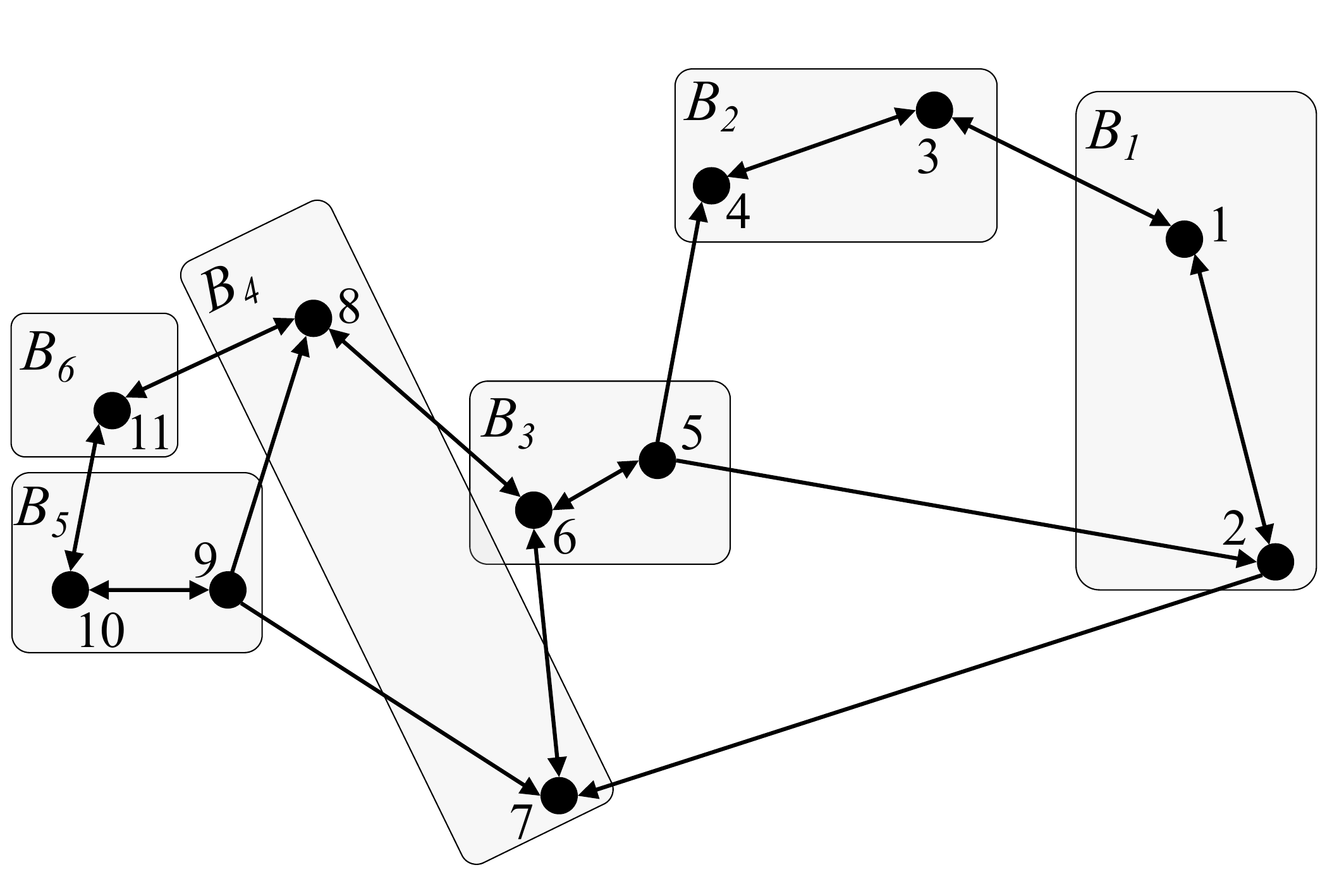}
    \label{exp:mrng_layout}
}%
\subfigure[BAMG]{
    \includegraphics[width=2.3in]{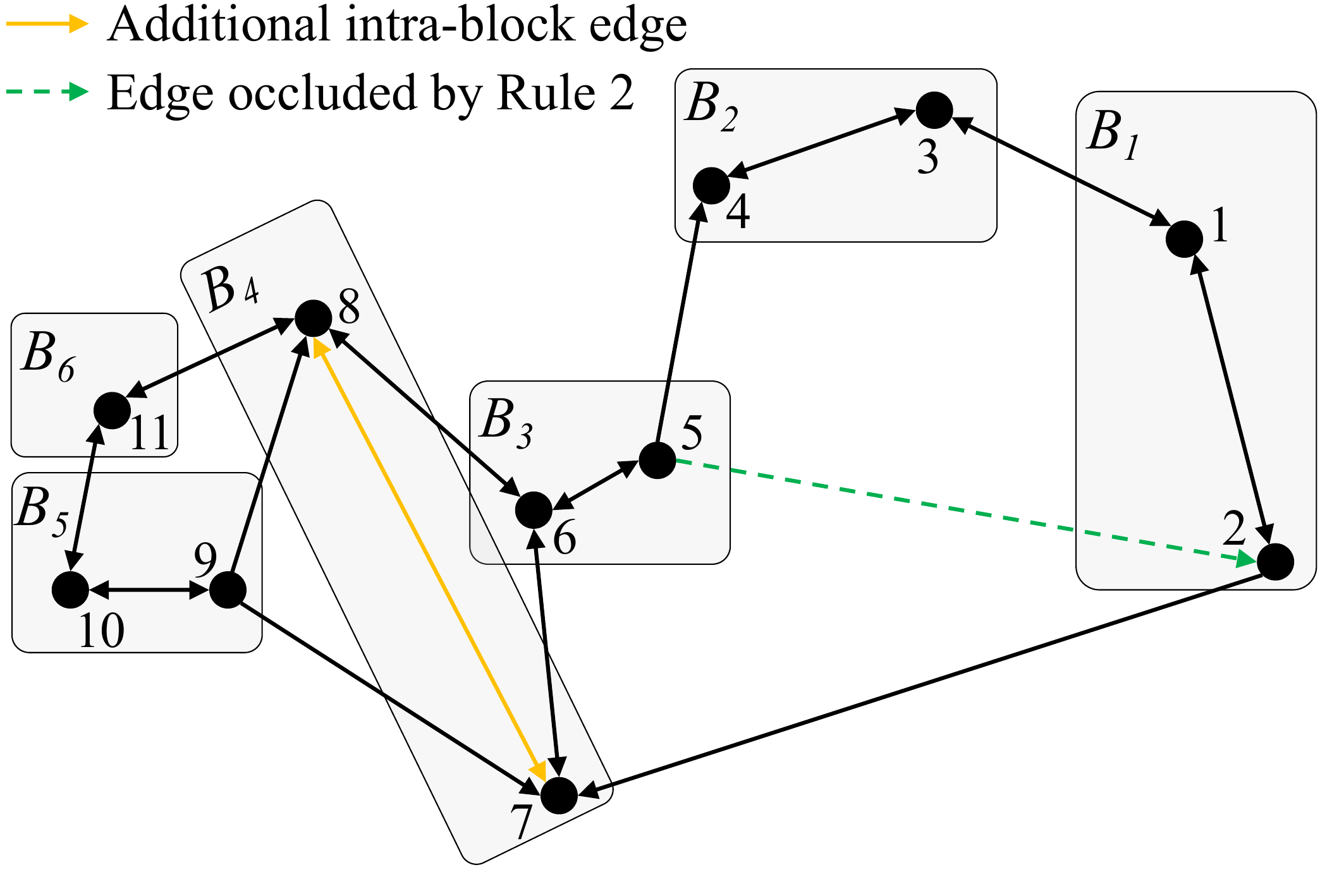}
    \label{exp:mrng_layout_pruning}
}%
\vspace{-1em}
\caption{An example of the building process of BAMG.}
\label{fig: BAMG exp}
\end{figure*}

\begin{sloppypar}
To prevent over-pruning and enable a tunable trade-off between preserving geometric properties and block-aware pruning, we refine Case 2 of Rule 2 as follows: an inter-block edge is pruned only if there exists a monotonic path of at most $\beta$ hops whose endpoint is sufficiently close to $q$ (controlled by $\alpha$), where $\beta$ bounds navigation cost and $\alpha$ sets the closeness threshold. Formally, $Prune(u, q) \Leftrightarrow \exists~ l \in \{1, \cdots, \beta\}, \exists~ \left [v_1, \cdots,v_l \right]$, such that $v_0 = u, \&~ \forall~ i \in {0, \cdots,l-1}: \delta(v_{i+1}, q) < \delta(v_i, q), \&~ \delta(v_l, q) \cdot \alpha < \delta(u, q)$. Here, $1 \leq \beta \leq c$ and $\alpha \geq 1$ are two hyperparameters. 
\end{sloppypar}

In summary, Rule~2 utilizes an intra-block multi-hop path to prune an inter-block edge. Since the pruning is based on the distance between the terminal node of the path and the candidate neighbor to be pruned, this strategy can also be interpreted as leveraging a node’s multi-hop neighbors to prune its directly connected neighbors. This inevitably exerts a negative effect on the geometric properties. Intuitively, $\beta$ restricts the pruning process such that only intra-block paths of at most $\beta$ hops can be used to evaluate the removal of a direct inter-block edge. The parameter $\alpha$, on the other hand, ensures that pruning occurs only if the terminal node of the path is sufficiently close to the candidate neighbor. In other words, an inter-block edge is pruned only when there exists an intra-block path leading to a closer neighbor within $\beta$ hops, thereby striking a balance between maintaining geometric guarantees and enabling block-aware pruning. 

\begin{algorithm}[htbp]
\caption{$build\_BAMG$ ($X$, $\beta$, $\alpha$)}
\label{alg:build-BAMG}
\LinesNumbered
\KwIn{A vector set $X$; parameters $\beta$, $\alpha$}
\KwOut{A BAMG $G'$}
    $G \gets $ build an NSG from $X$\;
    $\mathcal{B}(V, \mathcal{L}) \gets $ block assignment get by BNF algorithm~\cite{starling} on $ G$\;
    Initialize $G'$ as an empty graph\;
    \ForEach{node $u$  in $G$}{
        $C_{out} \gets$ the candidate set for $u$ in building $G$.\;
        \ForEach{neighbor $v$ of $u$}{
            \If{$\mathcal{L}(u) = \mathcal{L}(v)$}{
                Add $v$ to $G'[u]$\;
                Remove $v$ from $C_{out}$\;
            }
        }
        $R_{out} \gets \emptyset$\;
        \ForEach{candidate node $q \in C_{out}$}{
            $occlude \gets false$\;
            \ForEach{node $v \in R_{out}$}{
                $search\_within\_block$($B_{\mathcal{L}(v)}, v, q, C, \beta$)\;
                \If{$\delta(C[0], q) * \alpha < \delta(v, q)$}
                {
                     $occlude \gets true$; break\;
                }
                \If{$\mathcal{L}(v) = \mathcal{L}(q)$}
                {
                    Add $q$ to $G'[v]$ and $v$ to $G'[q]$\;
                     break\;
                }
            }
            \If{$occlude = false$}{
                Add $q$ to $R_{out}$\;
            }
        }
        Add all nodes in $R_{out}$ to $G'[u]$\;
    }
    
    \Return $G'$\;
\end{algorithm}

\textbf{Algorithm Details.} The pseudo code for constructing a BAMG from a vector set is illustrated in Algorithm~\ref{alg:build-BAMG}. It takes as input the vector set $X$ along with two hyperparameters $\beta$ and $\alpha$. It outputs the constructed BAMG $G'$. Specifically, the algorithm first builds a Navigating Small World Graph (NSG) $G$ from $X$ (line 1), and obtains the block assignment $\mathcal{B}(V, \mathcal{L})$ by applying the BNF algorithm (line 2). For simplicity, we omit the parameters needed to construct the NSG and those required for the BNF algorithm. Then, it initializes an empty graph $G'$ (line 3) and iterates each node $u$ in $G$ (lines 4-22). It stores the candidate neighbor set of node $u$ obtained during the construction of graph $G$, which is denoted as $C_{out}$ (line 5). For each neighbor $v$ of $u$ (lines 6-9), if $u$ and $v$ belong to the same block (line 7), $v$ is directly added as a neighbor of $u$ in $G'$ (line 8) and removed from $C_{out}$ (line 9). The algorithm then initializes an empty set $R_\text{out}$ to store the retained inter-block neighbors (line 10). For each candidate $q$ in $C_\text{out}$ (line 11), the algorithm determines whether $q$ should be occluded. It does so by checking, for each node $v$ in $R_\text{out}$ (line 13), whether there exists a monotonic path within the block from $v$ to $q$ whose length is less than $\beta$ and for which the distance $\delta(C[0], q) \cdot \alpha < \delta(v, q)$ (line 15). If this condition holds, $q$ is occluded according to Rule 2 Case 2 (lines 16). Additionally, if $v$ and $q$ belong to the same block (line 17), $q$ is added as a neighbor of $v$ in $G$ and $v$ is added as a neighbor of $q$ (lines 18-19). If $q$ is not occluded by any $v$ in $R_\text{out}$, it is added to $R_\text{out}$ (lines 20-21). After processing all candidates, all nodes in $R_\text{out}$ are added as neighbors of $u$ in $G$ (line 22). The algorithm finally returns the constructed BAMG $G'$ (line 23).

\textbf{Complexity.}
For each node $u$ in Algorithm~\ref{alg:build-BAMG}, the algorithm iterates all of $u$’s neighbors, which takes $O(d)$ time, where $d$ is the maximum node degree in $G$. For neighbors within the same block, each check takes $O(1)$ time. For candidates from other blocks, the algorithm checks each candidate $q$ against all previously selected inter-block neighbors $R_{out}$, with at most $d$ such neighbors.
For each pair, it may need to search for a monotonic path of length up to $\beta$ within the block, which takes $O(\beta \cdot d)$ time. Therefore, the time per node is $O(d + \beta \cdot d^3)$, and the total time is $O(n(d + \beta \cdot d^3))$. Since NSG has an expected constant out-degree $d$ and $\beta$ is a self-defined constant parameter, the time complexity of obtaining BAMG from a given NSG and its block assignment can be simplified to $O(n)$. 

\textbf{Theoretical Analysis.}
As a practical variant of BMRNG, the strict guarantee in Theorem~\ref{theo 2} is relaxed. Because BAMG is an approximation designed for linear-time construction, it selects edges from a limited subset of candidate nodes rather than exhaustively evaluating the entire dataset.

\textbf{Discussion.} It is important to clarify that BAMG's construction is not merely an NSG with heuristic pruning. Standard NSG pruning strategies typically treat all edges symmetrically, selecting neighbors solely based on geometric proximity to reduce distance computations. In contrast, BAMG introduces a layout-aware paradigm that treats edges asymmetrically based on their costs. This shifts the edge selection criterion from focusing on hop counts to the number of disk blocks accessed, a critical distinction that existing heuristics fail to consider. 

\subsection{Layout of BAMG on Disk}
According to Theorem~\ref{theo 2}, the expected length of the I/O path is inversely correlated with the number of nodes stored in each block. However, since operating systems typically read data from disk in fixed-size ($4KB$) blocks, the optimal block size is generally constrained by this. One alternative to increasing the number of nodes per block is to reduce the space occupied by each node. We observe that raw vectors account for more than $90\%$ of the total size of graph-based indexes, and they are necessary to calculate precise distances during search. Inspired by this, we propose separating the storage of the graph index from raw vectors. During the search, PQ codes stored in memory are used to estimate distances and navigate the search, while raw vectors are only accessed to refine the results. If we defer the refinement step, we can store it separately from the graph index.

\begin{figure}[!t]
    \centering
    \includegraphics[width=0.95\columnwidth]{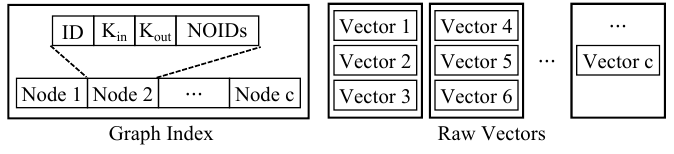}
    \vspace{-1em}
    \caption{Layout of graph index and raw vectors.}
    \label{fig: layout}
\end{figure}

As shown in Fig.~\ref{fig: layout}, each block contains a fixed number of nodes. This record consists of an node ID and a list of neighbors, with $K_{in}$ indicating the number of intra-block neighbors and $K_{out}$ indicating the number of inter-block neighbors. Each block is stored contiguously on disk. For the raw vectors corresponding to all nodes in a block, we store them sequentially in multiple contiguous blocks on disk. If the last block cannot be filled, the remaining space is left empty. 

\subsection{Multi-layer in-Memory Graph}
The selection of entry nodes (seeds) in graph-based indexes is critical for search performance. Well-chosen entry nodes enable efficient navigation by directing the search toward relevant regions of the graph from the outset. 
In disk-based ANNS, a common strategy is to keep a subset of nodes resident in memory, which serve as candidates for entry points to initiate searches on the disk-resident graph. For example, in Starling~\cite{starling}, the authors randomly sample points from the dataset and build a navigation graph that resides in memory. The search begins on this navigation graph, and its results are then used to access the disk-resident graph. 
However, random sampling of nodes may result in an uneven distribution, with possible concentration in certain regions, making their coverage of the dataset uncertain. It might be comprehensive or limited, but this cannot be guaranteed. 

In this section, we propose constructing a navigation graph by selecting nodes from the connected components within each block. The navigation graph is designed as a multi-layer structure with progressively smaller layers, providing flexibility to load different layers into memory based on available capacity. Under this strategy, each block contains at least one node that is included in the upper-layer graph. 

Algorithm~\ref{alg:mlng} outlines the construction of the multi-layer navigation graph. Initialized with the base BAMG (lines 1–2), he algorithm iteratively builds the hierarchy until the node count drops to $\gamma$ (lines 3–15). n each iteration, it selects representative nodes from the previous layer's blocks to ensure full topological coverage. Specifically, it prioritizes nodes with zero in-degree, supplementing them with additional nodes until every component within the block is reachable (lines 6–11). These selected nodes form $X_{\ell}$, which serves as the input for constructing the next layer's graph and partition (lines 12–16). 

\begin{algorithm}[htbp]
\caption{Build Multi-layer in Memory Graph}
\label{alg:mlng}
\LinesNumbered
\KwIn{Data $X$; three parameters $\beta$, $\alpha$, $\gamma$}
\KwOut{Multi-layer in memory navigation graph}
    $[G_0, B_0] \gets  build\_BAMG(X, \beta, \alpha)$\;
    Initialize multi-layer lists: $G \gets \emptyset$, $\mathcal{B} \gets \emptyset$\;
    $\ell \gets 1$; $n \gets \left| X \right|$\;
    \While{$n > \gamma$}{
        \ForEach{block $B$ in $\mathcal{B}_{\ell-1}$}{
            Compute in-degree for each node in $B$\;
            $X_B \gets$ nodes in $B$ with zero in-degree\;
            Mark nodes reachable from $X_B$ as covered\;
            \While{there exist uncovered nodes in $B$}{
                Select an uncovered node $v$, add $v$ to $X_B$\;
                Mark nodes reachable from $v$ as covered\;
            }
        }
        $X_{\ell} \gets \bigcup_{\mathcal{B}_{\ell-1}} X_B$\;
        $[G_{\ell}, B_{\ell}] \gets  build\_BAMG(X_{\ell}, \beta, \alpha)$\;
        Add $G_{\ell} $ to $G$, $B_{\ell}$ to $\mathcal{B}$\;
        $\ell \gets \ell + 1$; $n \gets \left|X_{\ell} \right|$\;
    }
    \Return $G$, $\mathcal{B}$\;
\end{algorithm}

\noindent\textbf{Discussion.} In Algorithm~\ref{alg:mlng}, for each connected component within every block of the lower-layer graph, we select the node with the minimum in-degree to include as an upper-layer node. Consequently, the reduction in the number of nodes at each layer is related to the ratio between the total number of nodes and the number of connected components within each block, hereafter referred to as the Node-to-Component Ratio (NCR). Let $\rho$ denote the average NCR in all blocks and layers. Then the number of layers of the navigation graph is $O(\log_{\rho}{(n/\gamma)})$. According to our experimental results, $\rho$ is around $5$.

\subsection{Search on BAMG}
Algorithm~\ref{alg: ANNS on BAMG} presents the detailed procedure for ANNS on a BAMG. The algorithm takes as input a disk-resident BAMG graph $G$, a block assignment $\mathcal{B}$, a query vector $q$, and parameters $k$, $l$, $l_{re}$ and $\beta$. First, it obtains a set of entry nodes by searching on the navigation graph (line 1). These nodes are inserted into a candidate pool $C$, which maintains a collection of potential nearest neighbors ordered by ascending distance to the $q$ (line 2). Then, it iteratively explores unchecked nodes in $C$ (lines 3-6). For each unchecked node $v$, the algorithm loads the block containing $v$ and performs a greedy search within the block to find closer candidates by function by \emph{search\_within\_block} (lines 4-6). When the nodes in $C$ are no longer updated, it loads the raw vectors of the top $l_{re}$ nodes in $C$, computes their true distances to $q$, and re-sort $C$ accordingly (line 7). Finally, the algorithm returns the top-$k$ nodes in $C$ as the approximate nearest neighbors of $q$.

Specifically, the \emph{search\_within\_block} function (lines 9-20) begins by pushing node $v$ into a queue for exploration (line 10) and calculate $\hat{\delta}(v, q)$ as $\hat{\delta}_{min}$ (line 11). It then repeatedly processes nodes from the queue, up to a depth of $\beta$ (line 12). For each node $v$ dequeued, its neighbors are inserted into the candidate pool $C$, which remains sorted by their distance to the query and retains only the top $l$ candidates (lines 13-15). For every unexplored intra-block neighbor $u$ of $v$ (line 16), if the distance from $u$ to $q$ is less than $\hat{\delta}_{min}$, $u$ is added to the queue for further exploration and $\hat{\delta}_{min}$ is updated (lines 17-19). Then, $u$ is marked as explored (line 20).

Unlike Starling's page search, which performs an exhaustive scan with PQ distance clipping on all vectors within the loaded block, BAMG utilizes a more effective way. Since optimal block assignment is NP-hard, blocks inevitably contain nodes that are stored together but geometrically distant. Scanning the entire block results in unnecessary distance computations. BAMG mitigates this by following intra-block edges and expanding only nodes reached along these edges.

\begin{algorithm}
\caption{ANNS on BAMG}
\label{alg: ANNS on BAMG}
\LinesNumbered 
\KwIn{a BAMG stored on disk $G$, block assignment $\mathcal{B}$, query vector $q$, $k$, $l$, $l_{re}$, $\beta$}  
\KwOut{$k$ nearest neighbors of query $q$}
\SetKwFunction{SWB}{Search_Within_Block}
\SetKwProg{Fn}{Function}{:}{}

    $s \gets$ the entry nodes obtained from the navigation graph;\\
    Candidate pool $C \gets C \cup s$;\\
    \While {$C$ has unexplored nodes} {
        $v \gets$ the first unchecked node in $C$;\\
        Load the block $B$ including $v$ from $G$;\\

        \emph{search\_within\_block}($B, v, q, C, \beta$);\\
    }
    Load the top $l_{re}$ raw vectors of nodes in $C$, compute their true distances to $q$, and re-sort $C$ accordingly;\\
    \Return the top-$k$ nodes in $C$;\\

\Fn{search\_within\_block($B$, $v$, $q$, $C$, $\beta$)}{
    $queue.\text{push}(v)$;\\
    $\hat{\delta}_{min} = \hat{\delta}(v, q)$;\\
    \While{$queue$ is not empty and current depth $ < \beta$}{
        $v \gets queue.\text{front}$;\\
        $queue.\text{pop}()$;\\  
        Insert $v$'s neighbors into $C$ (sorted by $\hat{\delta}(\cdot, q)$, retain top $l$);\\ 
        \ForEach{unexplored intra-block neighbor $u$ of $v$}{
            \If{$\hat{\delta}(u, q) < \hat{\delta}_{min}$}{
                $queue.\text{push}(u)$;\\ 
                $\hat{\delta}_{min} \gets \hat{\delta}(u, q)$;\\
            }
            Mark $u$ as explored;\\
        }
    }
}
\end{algorithm}

\subsection{Update of BAMG}
To support dynamic scenarios where data changes over time, following existing work~\cite{FreshDiskANN}, BAMG employs a hybrid update strategy that combines an in-memory buffer with periodic asynchronous merging.

\noindent\textbf{Insertions.}
We employ an in-memory index to buffer newly inserted vectors, making them immediately available for search. As this buffer grows, it is periodically merged into the static index in an asynchronous manner. The merging process consists of three steps: (1) selecting neighbors for the new nodes within the existing static graph using standard edge selection rules (e.g., NSG or Vamana); (2) applying a block shuffling algorithm to assign blocks to the new nodes and optimize their storage layout; and (3) refining connectivity by applying the BAMG edge occlusion rules to nodes in the updated blocks.

\noindent\textbf{Deletions.}
We adopt a lazy deletion strategy, in which a bitset is used to mark deleted vectors. Marked nodes remain in the graph to preserve connectivity during search traversal but are filtered out from the final result set using the bitset. Physical removal and space reclamation occur during the periodic asynchronous merging process described above.

\section{Experiments}\label{experiments}
In this Section, we conduct experiments to evaluate the performance of the proposed method by comparing it with baseline algorithms on widely used vector datasets.

\subsection{Experiment Settings}

\noindent\textbf{Implement Details.}
We implemented our methods based on Starling's code~\cite{starling}. All methods were implemented in C++ and compiled with g++ 8.5.0. Following the settings in~\cite{starling}, we enable the \emph{o\_direct} option to read data directly from disk. During both the search and refinement stages, we perform asynchronous I/O in a manner similar to Starling and manage asynchronous I/O via \emph{libaio}. For the PQ codes of the vectors in the datasets, we obtain them using the method in DiskANN~\cite{diskann}, adhering to the baseline settings.  We conduct our experiments on a server with an AMD 2.25GHz CPU, 1024GB RAM, and 2 sets of 1.9TB SATA 6Gb SSD.

\noindent\textbf{Datasets.}
As shown in Table~\ref{tab:dataset}, we evaluate our proposed methods using publicly available real-world datasets that vary in dimensionality. We randomly select 1,000 vectors not included in the datasets as query vectors and perform a brute-force search over the entire dataset to calculate the ground truth for them. 

\begin{table}[h]
    \caption{Statistics of datasets.}
    \label{tab:dataset}
    \centering
    \begin{tabular}{c|c|c|c}
        \hline
        \textbf{Dataset} & \textbf{Dimension} & \textbf{\# Base} & \textbf{\# Query}\\
        \hline
        DEEP~\cite{deep1M} & 96 & 100M & 1K \\
        \hline
        SIFT~\cite{gist_sift} & 128 & 100M & 1K \\
        \hline
        LAION~\cite{laion} & 512 & 100M & 1K \\
        \hline
        MSMARCO~\cite{msmarco} & 1024 & 50M & 1K \\
        \hline
    \end{tabular}
\end{table}

\noindent\textbf{Evaluation Metrics.} 
We use the following widely used metrics to evaluate the performance of different methods on ANNS queries:
\begin{itemize}
    \item \textit{Recall@k} measures the accuracy of ANNS, indicating the fraction of true nearest neighbors retrieved in the top‐$k$ results. It is defined as: $Recall@k = \frac{|R_k \cap G_k|}{k}$, where $R_k$ is the set of retrieved neighbors and $G_k$ is the set of true $k$ nearest neighbors. We set $k = 100$ by default.
    \item \textit{Queries Per Second (QPS)} measures the speed of an ANNS algorithm. It is calculated as: $ QPS = \frac{N}{T}$, where $N$ is the total number of queries processed and $T$ is the total execution time in seconds. Moreover, with a fixed number of query threads, higher QPS always corresponds to lower query latency. Therefore, we only report QPS as the evaluation metric in our experiments.
    \item \textit{Average Number of I/O (NIO)} measures the average number of data block reads required to process a query. As most of the query time in disk-resident ANN search is spent on disk accesses, NIO serves as a key metric indicating the number of block reads are required to complete a query, highlighting the effectiveness of the index and storage layout in reducing I/O operations. 
\end{itemize}

\noindent\textbf{Compared Methods.} We compare our method with the state-of-the-art disk-based ANNS methods:
\begin{itemize}
    \item \textit{Starling}~\cite{starling} is a disk-based graph index framework for efficient vector similarity search, optimizing storage layout to enhance performance and reduce constraints; we use BNF for block shuffling.
    \item \textit{DiskANN}~\cite{diskann} is a disk-based ANNS algorithm that supports billion-scale nearest neighbor queries.
    \item \textit{PipeANN}~\cite{PipeANN} aligns the search algorithm with SSD characteristics via dynamic pipelining that effectively overlaps computation with asynchronous I/O.
    \item \textit{BAMG} is our method proposed in Section~\ref{solution}. 
\end{itemize}
Here, Starling, DiskANN, and PipeANN use the same Vamana~\cite{diskann} index.

\noindent\textbf{Parameter Settings.}
The default block size is 4KB. On MSMARCO, we increase this to 8KB for PipeANN, DiskANN, and Starling, as 4KB is insufficient for their coupled storage format. For index building, we use a Vamana degree of 32, a candidate list size of 750, and $\alpha=1.2$. BAMG is refined from a Vamana graph with a maximum degree of 64, while its final average degree is capped at 32. We employ Starling's BNF algorithm for block assignment. Indexing uses 256 threads. Search utilizes 8 threads (one query per thread) with a default beam width of 8. $l_{re}$ is set to $1.5$ by $k$.

\subsection{Search Performance}
\subsubsection{Recall vs. QPS}
In this experiment, we compare our method with existing methods on four datasets. Fig.~\ref{fig:qps} presents the trade-off between recall and QPS. We observe that our BAMG outperforms the compared methods on all datasets. Specifically, compared to Starling at the same recall level, BAMG achieves a QPS improvement of $0.9\times$ to $2.1\times$ on the DEEP dataset (Fig.~\ref{ex:qps_DEEP100M}), $1.0\times$ to $2.0\times$ on SIFT (Fig.~\ref{ex:qps_sift}), $1.1\times$ to $2.7\times$ on LAION (Fig.~\ref{ex:qps_laion}), $1.2\times$ to $2.8\times$ on MSMARCO (Fig.~\ref{ex:qps_ms}).

\begin{figure}[t]
    \centering
    \subfigure[DEEP]{
        \includegraphics[width=0.46\linewidth]{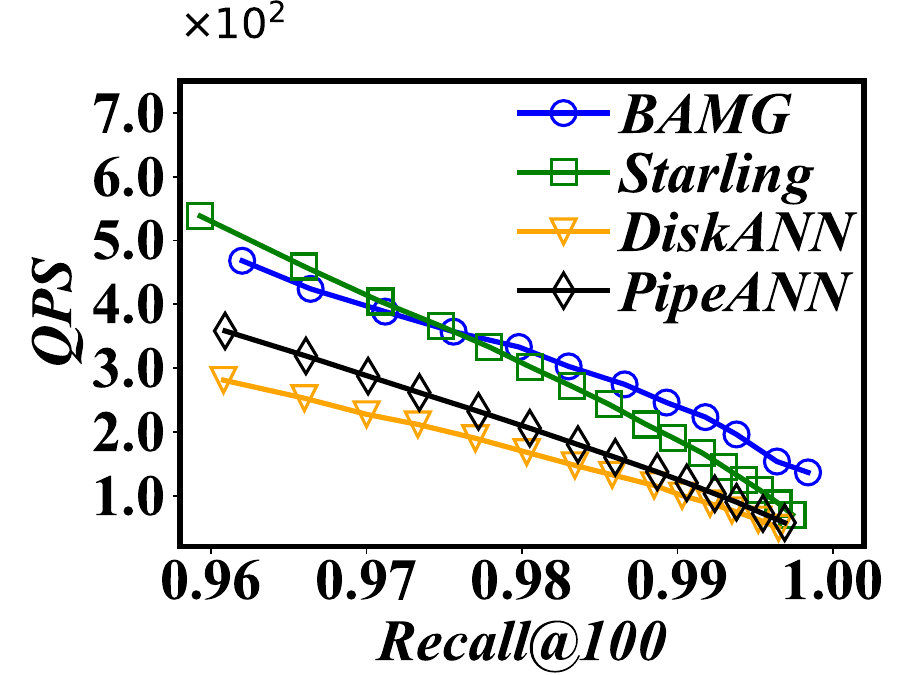}
        \label{ex:qps_DEEP100M}
    }
    \subfigure[SIFT]{
        \includegraphics[width=0.46\linewidth]{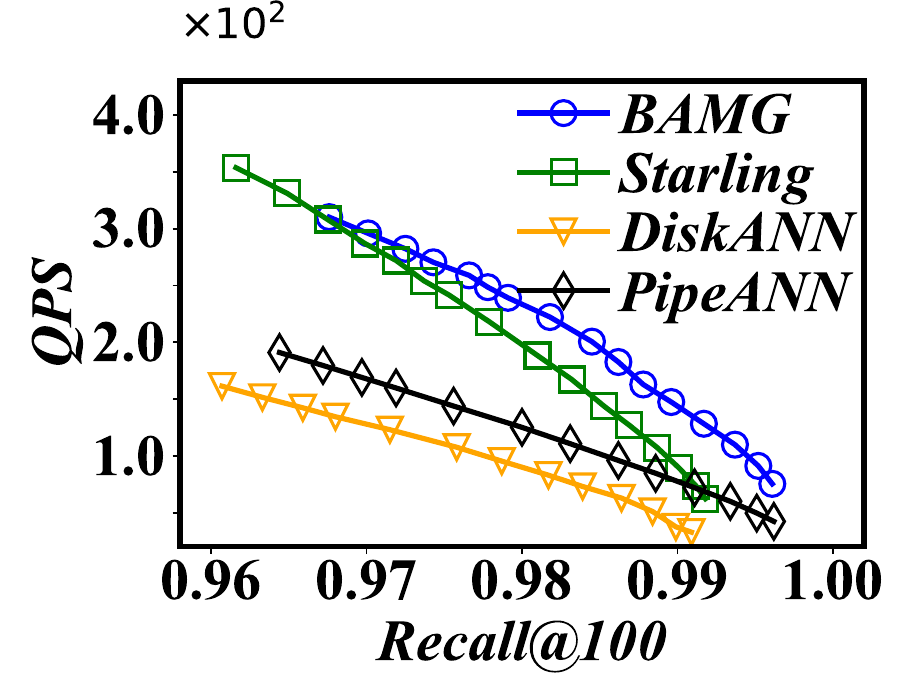}
        \label{ex:qps_sift}
    }
    \vspace{-0.3cm}
    \subfigure[LAION]{
        \includegraphics[width=0.46\linewidth]{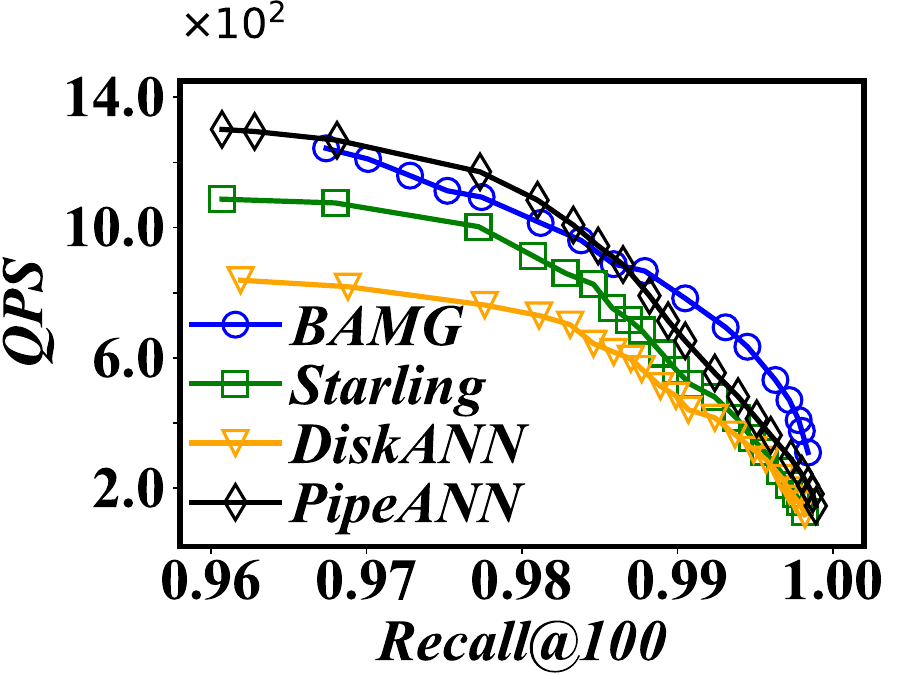}
        \label{ex:qps_laion}
    }
    \subfigure[MSMARCO]{
        \includegraphics[width=0.46\linewidth]{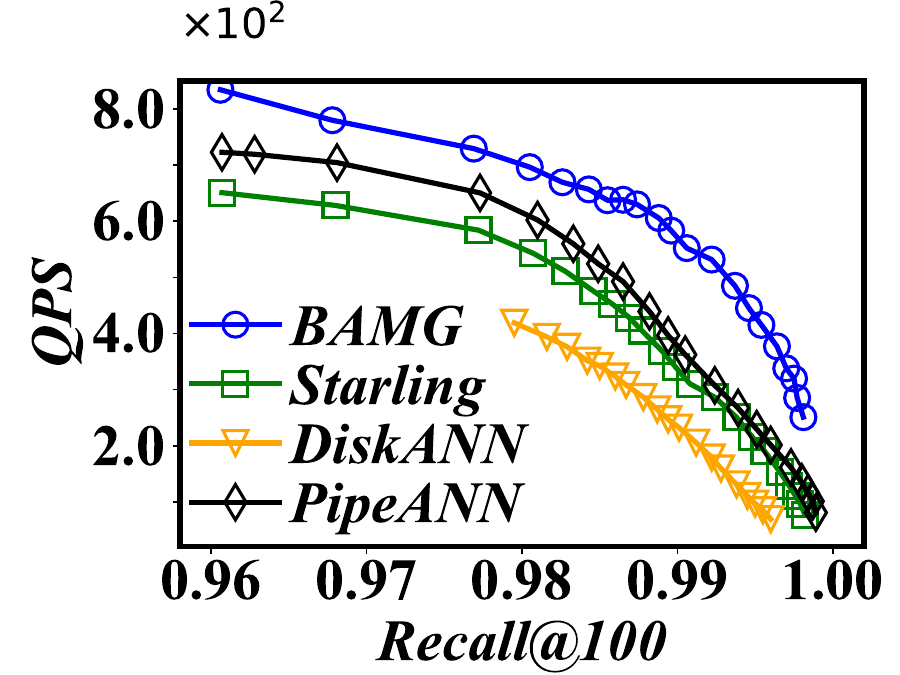}
        \label{ex:qps_ms}
    }
    \caption{QPS vs. Recall.}
    \label{fig:qps}
\end{figure}

\subsubsection{Recall vs. NIO}
Fig.~\ref{fig:NIO} shows the trade-off between recall and NIO. The NIO of BAMG includes both the number of blocks read to access the graph index and vectors. At the same recall level, BAMG also achieves a substantial reduction in NIO compared to Starling across all datasets. Specifically, the NIO is reduced by $-2.5\%$ to $33.9\%$ on DEEP (Fig.~\ref{ex:NIO_DEEP100M}), $3.4\%$ to $51.3\%$ on SIFT (Fig.~\ref{ex:NIO_sift}), $18.9\%$ to $56.7\%$ on LAION (Fig.~\ref{ex:NIO_laion}), $25.3\%$ to $78.2\%$ on MSMARCO  (Fig.~\ref{ex:NIO_ms}).

\begin{figure}[t]
    \centering
    \subfigure[DEEP]{
        \includegraphics[width=0.46\linewidth]{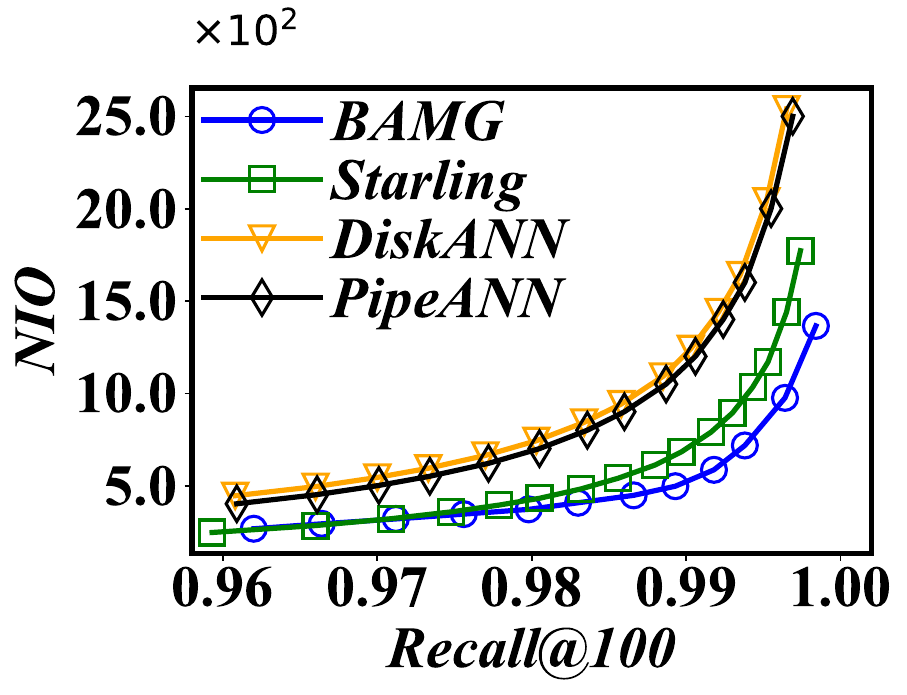}
        \label{ex:NIO_DEEP100M}
    }
    \subfigure[SIFT]{
        \includegraphics[width=0.46\linewidth]{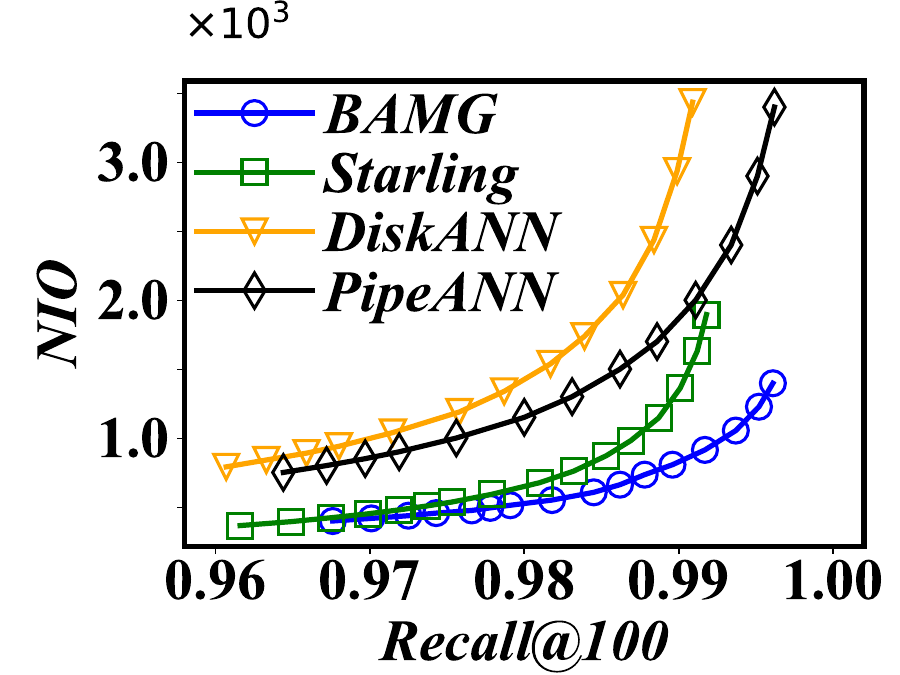}
        \label{ex:NIO_sift}
    }
    \vspace{-0.3cm}
    \subfigure[LAION]{
        \includegraphics[width=0.46\linewidth]{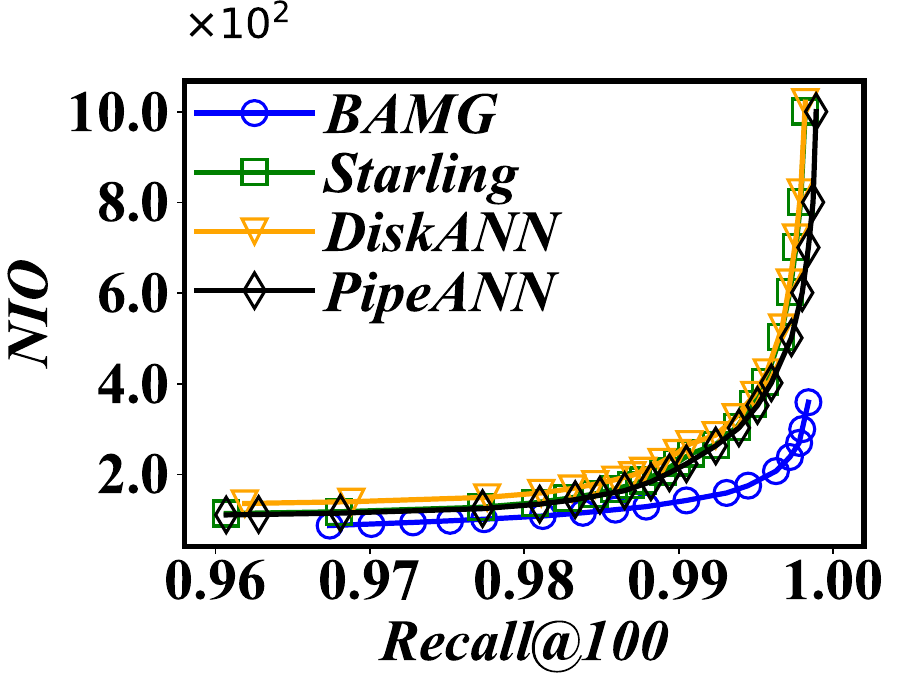}
        \label{ex:NIO_laion}
    }
    \subfigure[MSMARCO]{
        \includegraphics[width=0.46\linewidth]{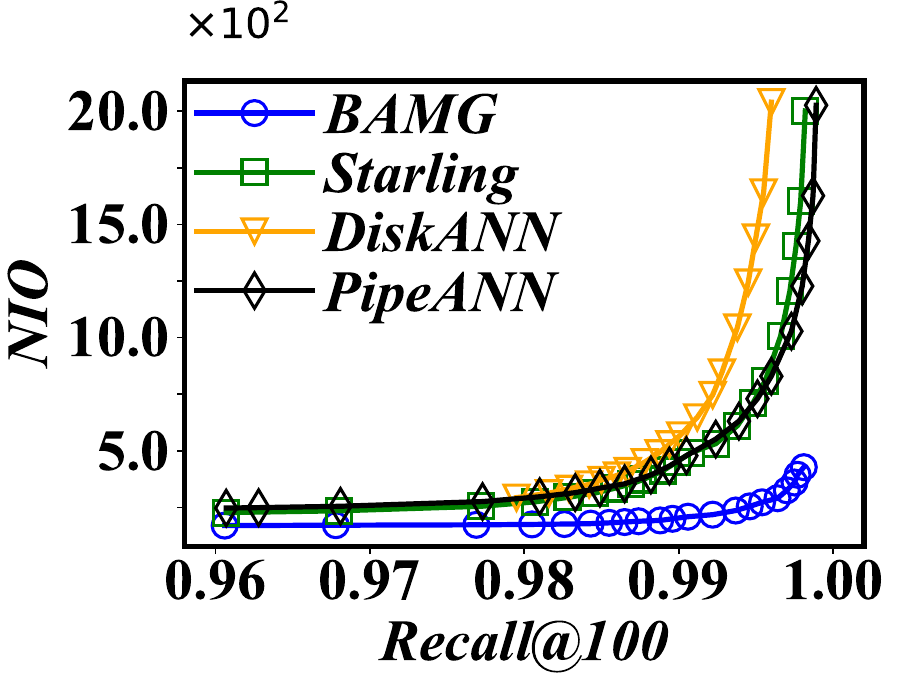}
        \label{ex:NIO_ms}
    }
    \caption{NIO vs. Recall.}
    \label{fig:NIO}
\end{figure} 

To better understand why BAMG improves search performance, we conducted a breakdown analysis of search latency and compared the ratio of intra-block edges and inter-block edges in both BAMG and Starling.
\subsubsection{Breakdown of Search Latency} 
As shown in Fig.~\ref{fig:breakdown_lat_ms}, we performed a time decomposition analysis on the MSMarco dataset. This experiment compares the breakdown of disk I/O latency in search stage ($T_{I/O}$), distance computation time ($T_{Comp}$), and I/O latency in the refinement stage ($T_{Refine~I/O}$). While BAMG introduces a refinement stage, it achieves a significantly lower total I/O cost compared to Starling and DiskANN. In addition, since our search algorithm only traverses edges within the same block, the distance computation time is reduced compared to Starling.

\subsubsection{Edge Distribution} Fig.~\ref{fig:starling_bamg_degree} compares the proportion of intra-block edges versus inter-block edges. BAMG significantly increases the proportion of intra-block edges compared to Starling across all datasets. Having more intra-block edges ensure that the search process can explore more candidates within the currently loaded block. This is why its $T_{I/O}$ in Fig.~\ref{fig:breakdown_lat_ms} significantly lower than that of Starling.

\begin{figure}[htbp]
    \centering
    \begin{minipage}[t]{0.24\textwidth}
        \centering
        \includegraphics[width=\linewidth]{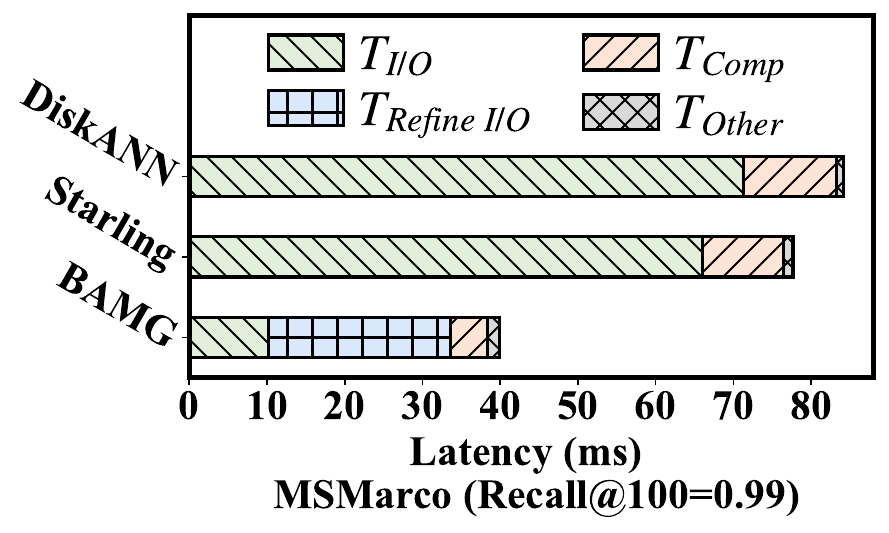}
        \vspace{-1.6em}
        \caption{Breakdown of search latency.}
        \label{fig:breakdown_lat_ms}
    \end{minipage}
    \hfill
    \begin{minipage}[t]{0.24\textwidth}
        \centering
        \includegraphics[width=\linewidth]{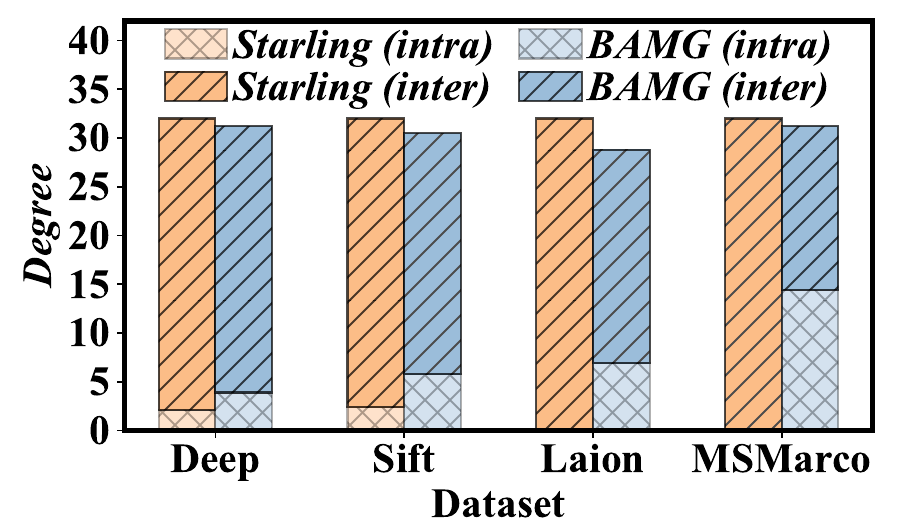}
        \vspace{-1.6em}
        \caption{Comparison of intra-block and inter-block edges.}
        \label{fig:starling_bamg_degree}
    \end{minipage}
\end{figure}

\subsection{Indexing Cost}
In this experiment, we evaluate the indexing cost of different methods in terms of indexing time and index size. 

\subsubsection{Indexing Time} Fig.~\ref{fig:index time} shows the indexing time required by different methods across multiple datasets. As the primary time overhead for all three methods lies in computing the PQ codes and constructing the proximity graph, Starling, DiskANN, and PipeANN show very little difference in indexing time. The longer indexing time of BAMG and Starling is primarily due to the additional block shuffling. Also, BAMG incurs further overhead for edge pruning based on BMRNG. However, the increase in indexing time is not significant, since reconstructing a BAMG from a monotonic graph is a linear-time operation. In addition, it is evident that as the data dimensionality increases, the indexing time also grows.

\subsubsection{Index Size} Fig.~\ref{fig:index size} illustrates the index size of different methods. Starling, DiskANN, and PipeANN have almost identical index sizes, as they use the same Vamana index and only differ in layout and navigation graph, which occupies almost negligible space. According to Fig.~\ref{fig: layout}, some blocks in BAMG may not be fully occupied when storing raw vectors. This is why BAMG's index size is larger than other algorithms on the Deep and Sift datasets. The use of fixed block sizes can lead to space wastage. For example, if the remaining space in a block is insufficient to store a node, a new block must be allocated for it. On the Laion and MSMarco datasets, the coupled storage method of the three comparison algorithms wastes more storage space. In contrast, on these two datasets, BAMG's decoupled storage method makes more efficient use of space, resulting in a smaller index size.

\begin{figure}[htbp]
    \centering
    \begin{minipage}[t]{0.24\textwidth}
        \centering
        \includegraphics[width=\linewidth]{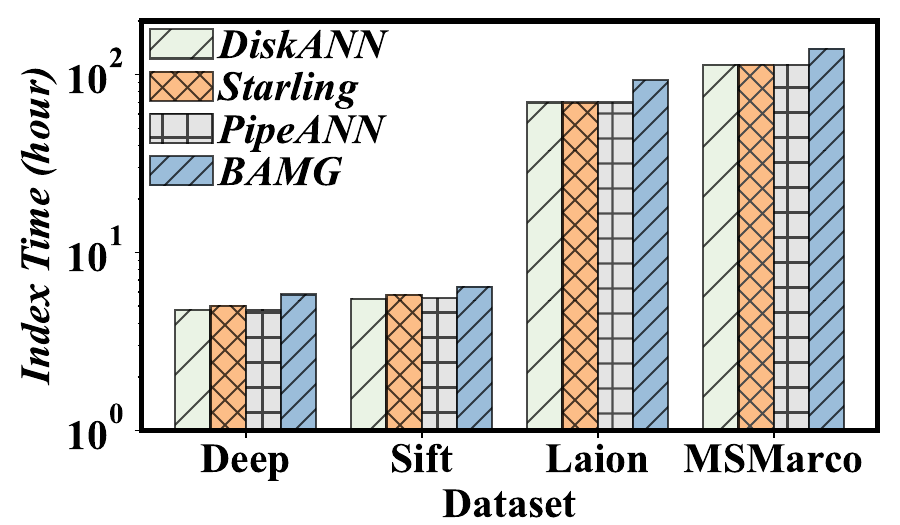}
        \vspace{-1.6em}
        \caption{Indexing Time.}
        \label{fig:index time}
    \end{minipage}
    \hfill
    \begin{minipage}[t]{0.24\textwidth}
        \centering
        \includegraphics[width=\linewidth]{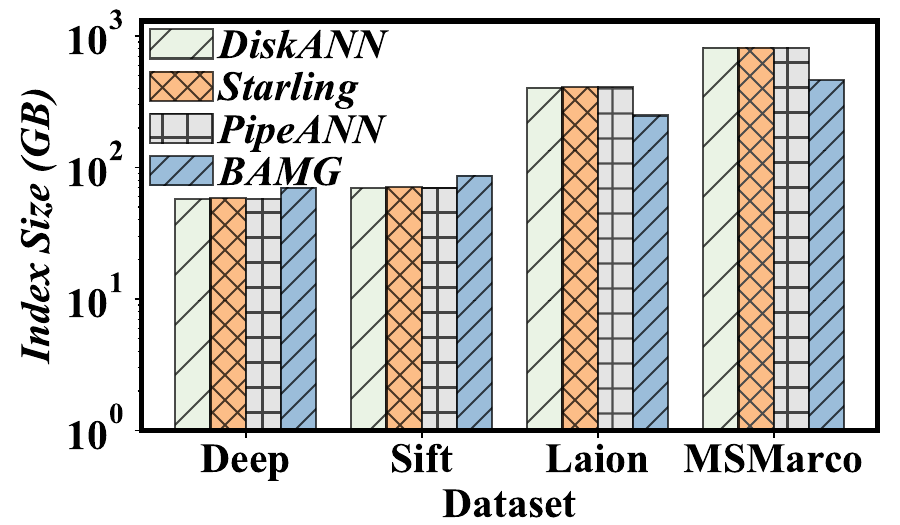}
        \vspace{-1.6em}
        \caption{Index Size.}
        \label{fig:index size}
    \end{minipage}
\end{figure}

\subsection{Parameter Sensitivity Analysis}
We conducted experiments on the MSMarco dataset to evaluate the impact of $alpha$, $beta$, degree limit $R$, and beam width on search performance.

\subsubsection{Effect of $\alpha$}
The parameter $\alpha$ controls the distance threshold used when deciding whether to prune an inter-block edge in favor of an intra-block path. As shown in Fig.~\ref{ex:effect_alpha}, increasing $\alpha$ from 1.1 to 1.4 leads to an improvement in QPS and NIO. A smaller $\alpha$ (e.g., 1.1) enforces a stricter pruning condition, which likely compromises the graph's connectivity, which may force the search algorithm to access more disk blocks to compensate for the missing edges. As $\alpha$ increases to 1.2 and beyond, the graph maintains better connectivity while still being efficient for I/O.

\begin{figure}[t]
    \centering
    \subfigure[QPS]{
        \includegraphics[width=0.46\linewidth]{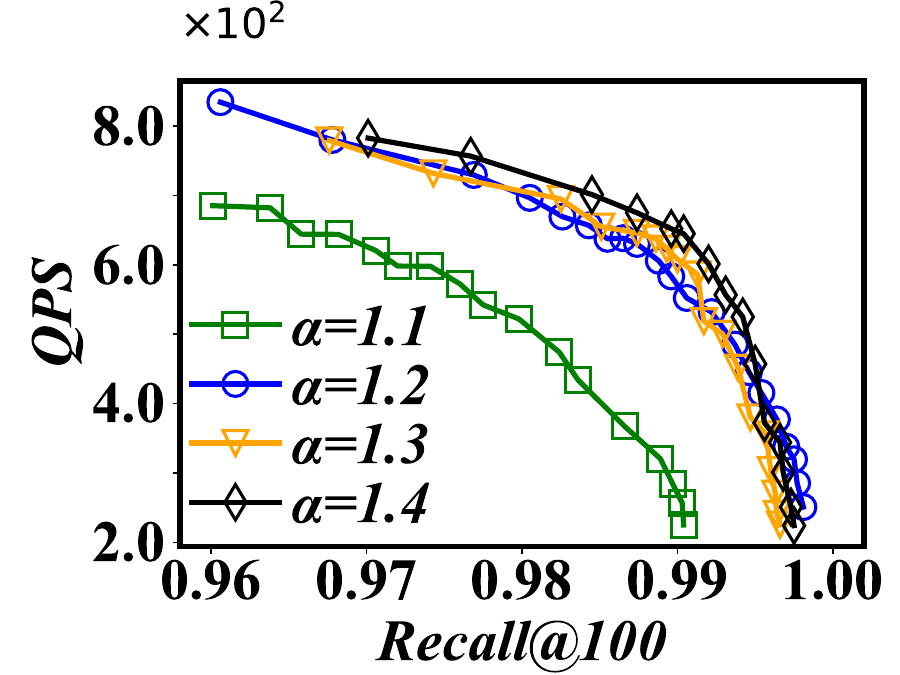}
        \label{ex:alpha_qps}
    }
    \subfigure[NIO]{
        \includegraphics[width=0.46\linewidth]{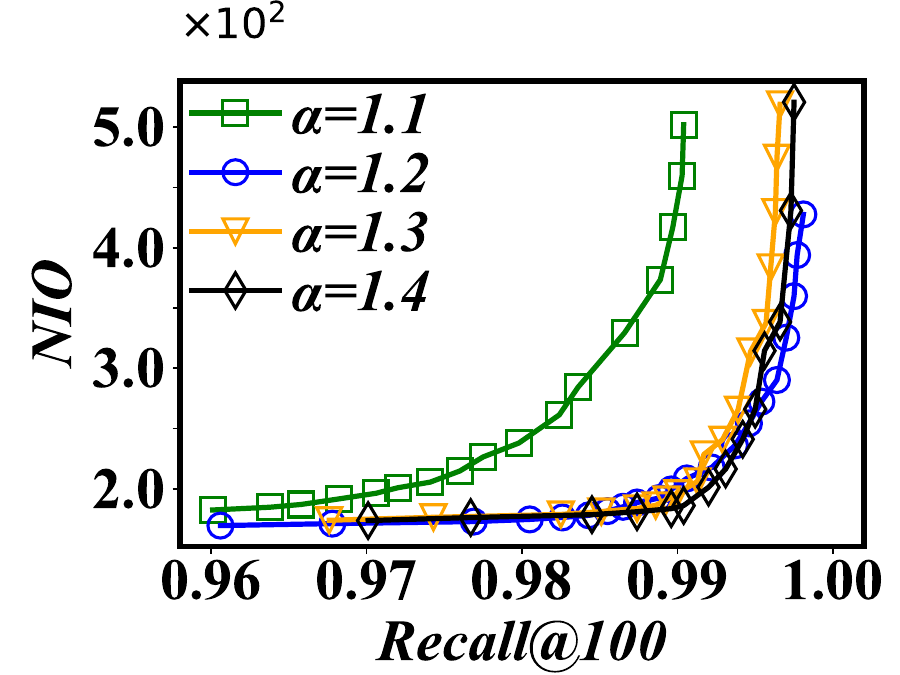}
        \label{ex:alpha_ioc}
    }
    \vspace{-0.3cm}
    \caption{Effect of $\alpha$.}
    \vspace{-0.2cm}
    \label{ex:effect_alpha}
\end{figure}

\subsubsection{Effect of $\beta$}
The parameter $\beta$ determines the maximum search depth allowed within a block to find a monotonic path within the block. Fig.~\ref{ex:effect_depth} presents a positive correlation between $\beta$ and search performance. As $\beta$ increases from 1 to 4, QPS improves, and NIO decreases. A higher $\beta$ allows the algorithm to explore deeper within the currently loaded block to find a viable path to the target. This enables the replacement of expensive inter-block edges with intra-block traversals. 

\begin{figure}[t]
    \centering
    \subfigure[QPS]{
        \includegraphics[width=0.46\linewidth]{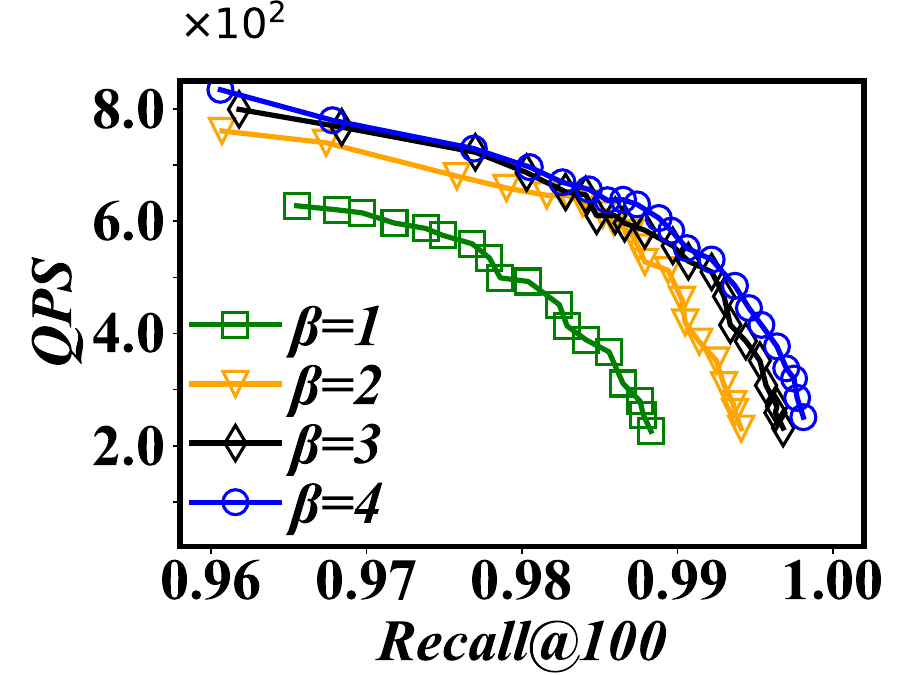}
        \label{ex:depth_qps}
    }
    \subfigure[NIO]{
        \includegraphics[width=0.46\linewidth]{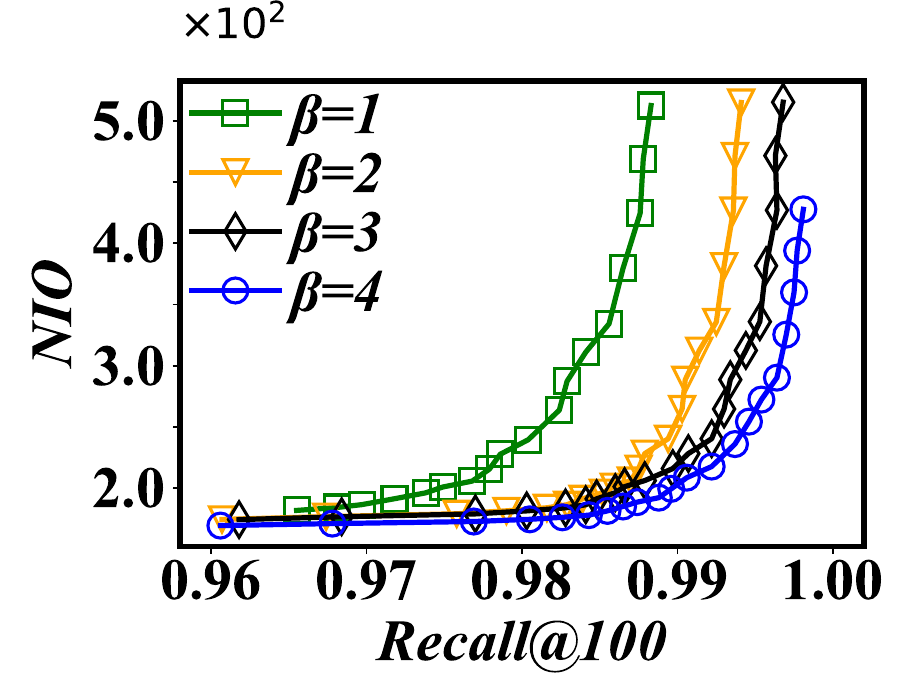}
        \label{ex:depth_ioc}
    }
    \vspace{-0.3cm}
    \caption{Effect of $\beta$.}
    \vspace{-0.2cm}
    \label{ex:effect_depth}
\end{figure}

\subsubsection{Effect of Degree Limit $R$} As presented in Fig.~\ref{ex:effect_r}, BAMG exhibits poor performance when $R$ is excessively small. However, as $R$ increases from 16 to 32, we observe a significant performance improvement. Beyond $R=32$, further increasing $R$ yields negligible gains and can even lead to slight performance degradation. The reason is that, when $R$ is small, the graph suffers from poor connectivity, which severely negatively impacts recall. Once $R$ reaches a sufficient level (e.g., 32), BAMG achieves adequate connectivity. However, a larger $R$ increases the storage overhead per node, thereby reducing the number of nodes that can fit into a single disk block. Consequently, the marginal benefits of improved connectivity are outweighed by the negative impact of increased storage overhead, resulting in the observed performance drop.
\begin{figure}[t]
    \centering
    \subfigure[QPS]{
        \includegraphics[width=0.46\linewidth]{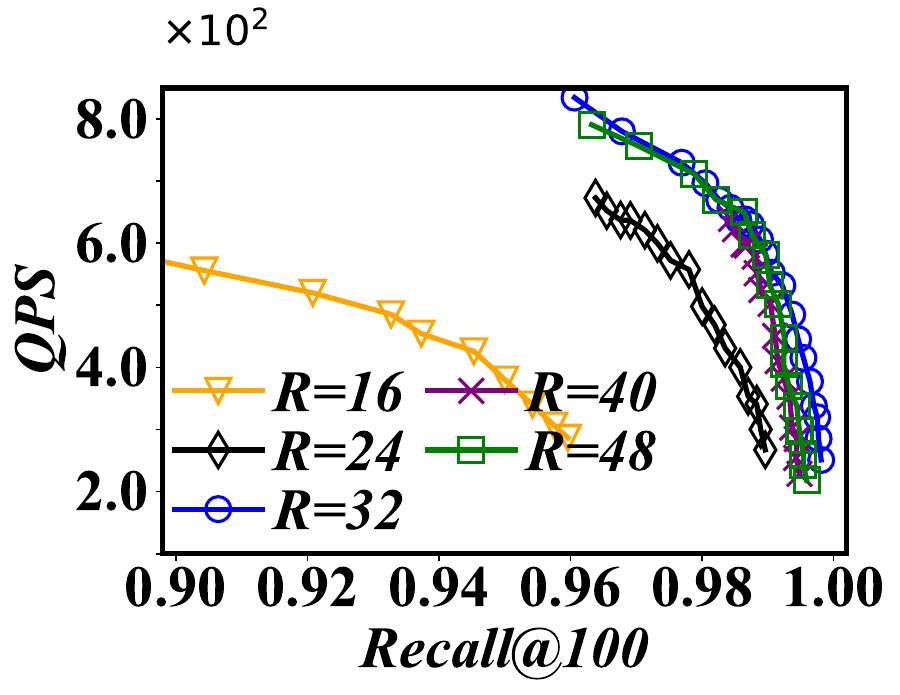}
        \label{ex:r_qps}
    }
    \subfigure[NIO]{
        \includegraphics[width=0.46\linewidth]{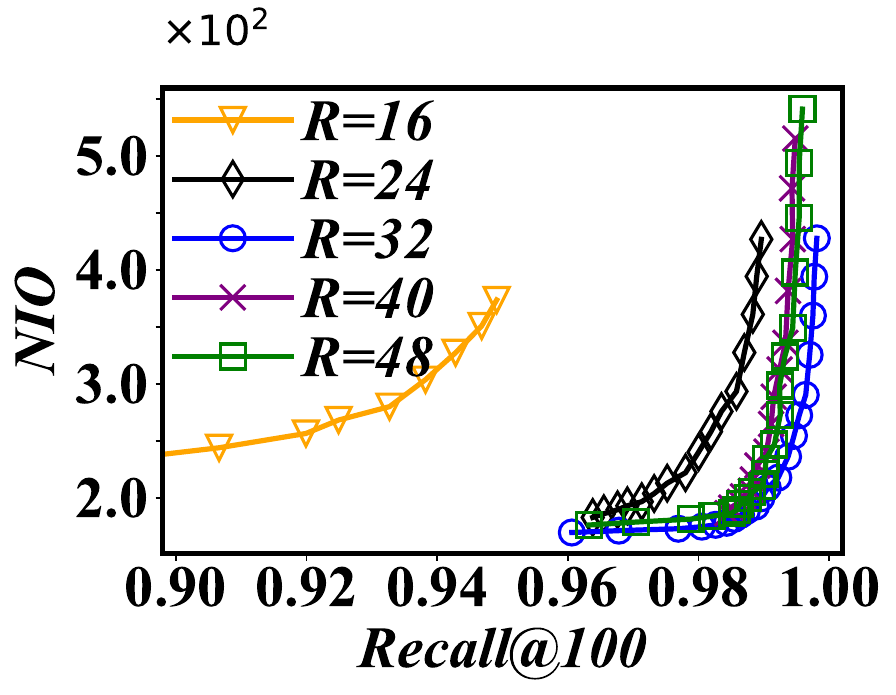}
        \label{ex:r_ioc}
    }
    \vspace{-0.3cm}
    \caption{Effect of Degree Limit $R$.}
    \vspace{-0.2cm}
    \label{ex:effect_r}
\end{figure}

\subsubsection{Effect of Beam Width} Fig.~\ref{fig:effect_beam_width} illustrates the impact of this search parameter on QPS. BAMG consistently outperforms the comparative methods as the beam width varies from $4$ to $20$. We observe that the QPS improvement plateaus when the beam width reaches $8$. Notably, setting an excessively large beam width can even lead to a performance degradation. This indicates that a beam width of $8$ strikes a cost-effective balance for our hardware configuration.

\begin{figure}[htbp]
    \centering
    \begin{minipage}[t]{0.24\textwidth}
        \centering
        \includegraphics[width=\linewidth]{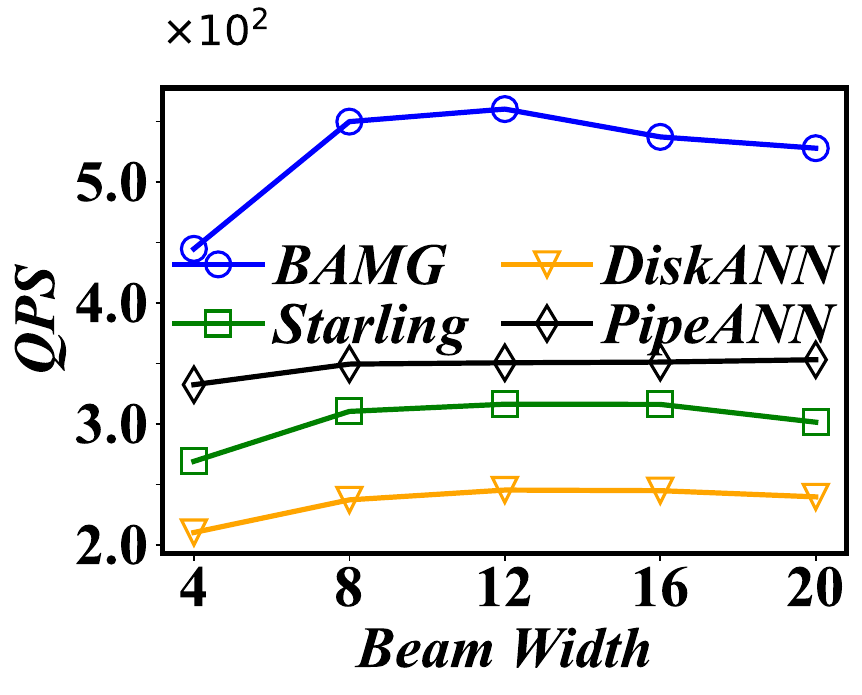}
        \vspace{-1.6em}
        \caption{Effect of Beam Width (Recall@100=0.99).}
        \label{fig:effect_beam_width}
    \end{minipage}
    \hfill
    \begin{minipage}[t]{0.24\textwidth}
        \centering
        \includegraphics[width=\linewidth]{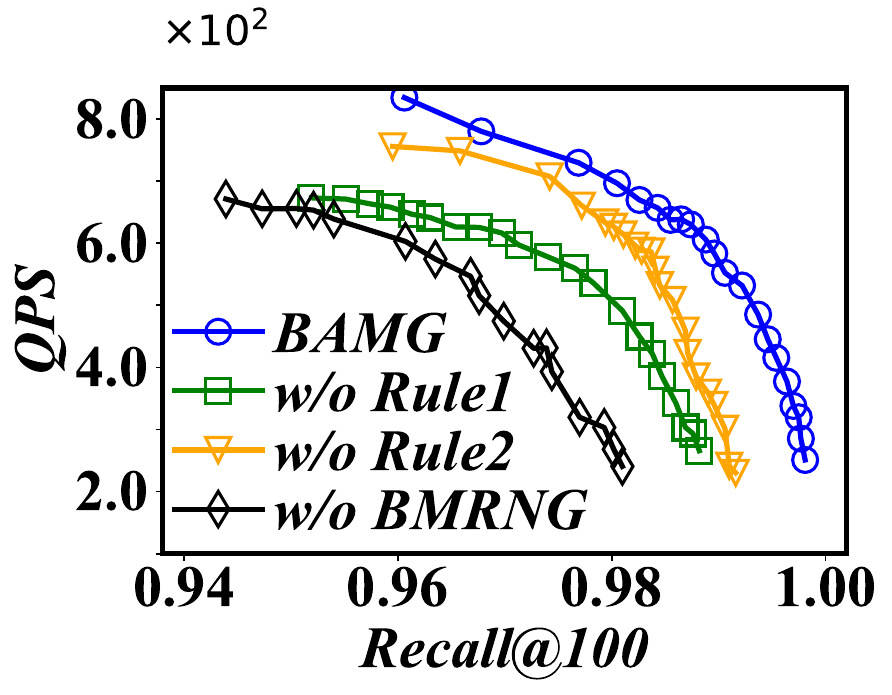}
        \vspace{-1.6em}
        \caption{Ablation Study on Edge Occlusion Rules.}
        \label{fig:ab_rules}
    \end{minipage}
\end{figure}

\subsection{Ablation Study}
We conduct necessary ablation studies to verify the effectiveness of the proposed methods.
\subsubsection{Edge Occlusion Rules} 
Fig.~\ref{fig:ab_rules} presents the ablation study on the edge occlusion rules of BMRNG. Overall, BAMG achieves the highest QPS, outperforming variants where specific rules are removed. We observe that removing Rule 1 results in a more substantial performance drop compared to removing Rule 2, indicating that the intra-block monotonicity guarantee is fundamental to our search algorithm. Furthermore, the significant gap between BAMG and ``w/o BMRNG'' validates the effectiveness of jointly considering geometric distance and block-level storage layout.

\subsubsection{Navigation Graph Strategy}  
Our strategy for constructing navigation graph (NG) prioritizes topological coverage by selecting nodes from every block. As shown in Table~\ref{tab:ab_ng}, this prioritization may lead to a misalignment with the query distribution, resulting in higher average latency than random sampling used in PipeANN and Starling. However, it effectively improves P99 tail latency.

\begin{table}[t]
    \centering
    \caption{Ablation study on NG (Recall@100=0.99).}
    \label{tab:ab_ng}
    \begin{tabular}{lccc}
        \hline
        \multirow{2.5}{*}{\textbf{Metric}} & \multicolumn{3}{c}{\textbf{BAMG Variants}} \\
        \cline{2-4}
         & \textbf{Default} & \textbf{w/o NG} & \textbf{w/ Random NG} \\
        \hline
        Lat. (ms)  & 11.72 & 12.1 & \textbf{11.02} \\
        \hline
        P99 Lat. (ms)  & \textbf{14.88} & 17.47 & 15.96 \\
        \hline
    \end{tabular}
\end{table}

\section{Related Work}\label{related work}
In this Section, we review the existing disk-based ANNS algorithms related to out work, which can be classified into three categories: hash-based~\cite{icde_learned_functions, I-LSH, ei-lsh}, quantization-based~\cite{pqbf}, and graph-based~\cite{spann, filtered-diskann, diskann, hm-ann, starling}. 

\subsection{Hash-based Indexes}
Locality sensitive hashing (LSH) has proven to be a powerful technique for vector search. By hashing similar items into the same buckets, LSH significantly reduces the number of comparisons needed. 

Its ability to efficiently approximate nearest neighbor searches has led to pioneering work in its application to disk-resident index scenarios. For example, I-LSH~\cite{I-LSH} highlights the radius expansion process within the LSH framework, known as virtual rehashing. This method adopts an incremental radius expansion strategy. It identifies the nearest points based on the distance to the query in the projection, thereby minimizing disk I/O operations by avoiding the retrieval of unnecessary buckets. Learned hash functions are used in~\cite{icde_learned_functions}, which are then employed by an I/O efficient index for large-scale VSS. Specifically, it utilizes a deep neural network to develop hash functions by matching the similarity orders derived from the original space with those in the hash embedding space. 
\subsection{Quantization-based Indexes}
By converting high-dimensional vectors into compact, discrete representations, quantization significantly reduces both storage requirements and computing cost~\cite{lopq, GNO-IMI, spann, pqbf}. 

Quantization-based methods are often paired with inverted files~\cite{ivf, spann, ivfadc}. This framework first partitions vectors into clusters, then encodes the vectors within each cluster using compact codes such as product quantization. During search, only relevant clusters are scanned, and approximate distances are efficiently computed via precomputed codebooks, enabling scalable and memory-efficient similarity search. For example, SPANN~\cite{spann} employs an inverted index with balanced posting lists and dynamic pruning to reduce I/O costs. However, the search results of SPANN lack accuracy guarantees and SPANN has a high space cost because too much data is stored repeatedly. PQBF~\cite{pqbf} introduces an I/O-efficient product quantization approach by organizing PQ codes within a B$^+$-forest for fast pruning and disk-based retrieval.

\subsection{Graph-based Indexes}
Recent studies (e.g., \cite{nsg, hnsw, survey-vector}) have shown that graph-based indexes provide the most appealing performance in ANNS. Consequently, recent efforts have focused on enhancing the efficiency of graph-based methods in disk-based scenarios~\cite{PipeANN, starling, filtered-diskann, xn-graph, diskann, margo, hm-ann}.

For instance, DiskANN~\cite{diskann} divides the dataset into overlapping clusters and builds Vamana graphs for each cluster, connecting them through shared nodes. Wang et al. proposed Starling~\cite{starling} focusing on disk-based graph index layout and search optimizations. Specifically, they reordered the graph with the aim of enhancing the poor data locality. More recently, MARGO~\cite{margo} optimizes disk-based graph layout by prioritizing edges important to monotonic search paths. It adopts an edge-centric objective and edge weights are efficiently computed during index construction. PipeANN~\cite{PipeANN} modifies the best-first search algorithm to overlap computation with asynchronous SSD I/O operations. Using a dynamic I/O pipeline that adjusts its width during the search phases to balance latency and throughput, PipeANN significantly reduces the I/O latency.

\section{Conclusion and Future Work}\label{conclusion}
In this paper, we have proposed BAMG, a novel Block-Aware Monotonic Graph index for disk-based approximate nearest neighbor search.  BAMG is motivated by the need to jointly optimize both the graph structure and the storage layout on disk to effectively reduce disk I/O costs. Building upon the theoretical framework of BMRNG, BAMG achieves I/O monotonicity while ensuring scalability to large and high-dimensional datasets through block-aware edge selection and a decoupled storage strategy for the graph index and raw vectors. Comprehensive experiments on real-world datasets demonstrate that BAMG significantly outperforms the state-of-the-art disk-based ANNS methods in terms of search speed and I/O efficiency, while maintaining high search accuracy. 

Despite these results, we acknowledge a few limitations in our current approach. First, block-aware edge occlusion inevitably corrupts geometric properties, which is amplified when the candidate set for the search is set to be small. Second, the decoupled storage design and the constraint of fixed 4KB block alignment may lead to internal fragmentation, resulting in a slightly larger index size.

\clearpage
\section*{AI-Generated Content Acknowledgement}
Text, pictures and codes are not generated by AI tools.

\bibliographystyle{ieeetr}
\bibliography{ref}
\end{document}